\def\be{\begin{equation}}
\def\ee{\end{equation}}
\def\bee{\begin{eqnarray}}
\def\ene{\end{eqnarray}}
\def\bes{\begin{subequations}}
\def\ees{\end{subequations}}
\def\no{\nonumber}
\newtheorem{prop}{Proposition}
\def\d{\displaystyle}
\def\v{\vspace{0.1in}}
\def\no{{\nonumber}}
\newtheorem*{proof}{Proof}
\newtheorem{RH}{Riemann-Hilbert problem}
\newtheorem{remark}{Remark}
\begin{document}

\baselineskip=14pt \renewcommand {\thefootnote}{\dag}
\renewcommand
{\thefootnote}{\ddag} \renewcommand {\thefootnote}{ }

\pagestyle{plain}

\begin{center}
\baselineskip=16pt \leftline{} \vspace{-.3in} {\Large \bf  Breather gas and shielding of the focusing nonlinear Schr\"odinger
equation with nonzero backgrounds} \\[0.2in]
\end{center}

\begin{center}
Weifang Weng$^{1}$, Guoqiang Zhang$^{2}$, Boris A. Malomed$^{3
}$, Zhenya Yan$^{2,4,*}$\footnote{$^*${\it Email address}: zyyan@mmrc.iss.ac.cn (Corresponding author)}  \\[0.15in]
{\it \small $^{1}$School of Mathematical Sciences, University of Electronic Science and Technology of China, Chengdu, Sichuan, 611731, China \\
$^{2}$State Key Laboratory of Mathematical Sciences, Academy of Mathematics and Systems Science,\\  Chinese Academy of Sciences, Beijing 100190, China \\
$^{3}$Department of Physical Electronics, School of Electrical Engineering, Tel Aviv University,
Tel Aviv 69978, Israel\\
$^{4}$School of Mathematics and Information Science, Zhongyuan University of Technology, Zhengzhou 450007, China}
\end{center}

\baselineskip=13pt

\baselineskip=13pt

\vspace{0.2in}

\noindent {\bf Abstract:}\,
Breathers have been experimentally and theoretically found in many physical systems -- in particular, in  integrable nonlinear- wave models. A relevant problem is to study the \textit{breather gas}, which is the limit, for
$N\rightarrow \infty $, of $N$-breather solutions. In this paper, we investigate the breather gas in the framework of the focusing
nonlinear Schr\"{o}dinger (NLS) equation with nonzero boundary conditions, using
the inverse scattering transform and Riemann-Hilbert problem.
We address aggregate states in the form of $N$-breather solutions, when the
respective discrete spectra are concentrated in specific domains. We show
that the breather gas coagulates into a single-breather solution whose spectral eigenvalue is
located at the center of the circle domain, and a multi-breather solution
for the higher-degree quadrature concentration domain. These coagulation phenomena in the breather gas are called
\textit{breather shielding}. In particular, when the nonzero boundary
conditions vanish, the breather gas reduces to an $n$-soliton solution. When
the discrete eigenvalues are concentrated on a line, we derive the
corresponding Riemann-Hilbert problem. When the discrete spectrum is uniformly distributed within an ellipse, it is equivalent to the case of the line domain. These results may be useful to design experiments with breathers in physical settings.  \\

\noindent {\bf Keywords}\, Nonlinear Schr\"odinger equation $\cdot$ Nonzero boundary conditions $\cdot$
Riemann-Hilbert problems $\cdot$ $N$-breather solutions $\cdot$ Breather gas $\cdot$ Breather shielding \\

\noindent {\bf Mathematics Subject Classification} 35Q55 $\cdot$ 35Q51 $\cdot$ 35Q15 $\cdot$ 37K40 $\cdot$ 37K10

\vspace{0.1in}

\baselineskip=14pt

\vspace{0.1in}

\section{Introduction}

In 1834, solitary waves were discovered by Russell~\cite{sw}, and Korteweg and de Vries established the KdV equation to describe this wave phenomenon in 1895~\cite{kdv}. However, these significant results did not receive enough attention at that time. Until 1955, Fermi, Pasta and Ulam~\cite{FPU} numerically  investigated the thermalization process of a solid, which was called the Fermi-Pasta-Ulam (FPU) problem, and broke new branches of nonlinear science (e.g., solitons and chaos), and numerically simulating scientific problems~\cite{FPU-history}.
In 1965, Zabusky and Kruskal, motivated by the Fermi-Pasta-Ulam-Tsingou (FPUT) problem~\cite{FPU}, coined the concept of {\it `solitons'}, as elastically interacting solitary-waves solutions of the KdV equation (continuum limit of FPUT problem) with periodic initial data~\cite{soliton}. In 1967, Gardner {\it et al}~\cite{GGKM} discovered the inverse scattering transform (IST) to produce exact $\mathcal{N}$-soliton solutions of the KdV, starting from its spectral problem (alias the Lax pair~\cite{lax}), as
elaborated in detail in the classical work of Ablowitz {\it et al}~\cite{AKNS}. Parallel to that, the integrability of the
nonlinear Schr\"{o}dinger (NLS) equations and solitons produced by them were
discovered by Zakharov {\it et al} in 1971 \cite{zs72}. The predicted fundamental bright solitons and breathers (periodically oscillating $\mathcal{N}$th-order solitons,
which are also exact solutions of the NLS equation with the self-focusing
nonlinearity \cite{SY}) were created in optical fibers, for $%
\mathcal{N}=1$, $2$ and $3$ by Mollenauer {\it et al}~\cite{Mollenauer-1}. The $\mathcal{N}$th-order breather
may be considered as a bound state of $\mathcal{N}$ fundamental solitons
with unequal amplitudes, the ratios between which are $1:3:5:~\cdots~:2\mathcal{%
N}-1$. These bound states are fragile ones, in the sense that their binding
energy is zero~\cite{SY}. Nevertheless, they can be readily stabilized in fiber lasers~\cite{s2,s3,s4,s5},
where breathers are basic operation modes~\cite{laser}.

Another fundamentally important realization of the NLS equation (alias the Gross-Pitaevskii equation) is provided
by  quasi-one-dimensional Bose-Einstein condensates (BECs) with attractive
inter-atomic interactions~\cite{s9}. Fundamental solitons in BECs were first observed
in 2002 in condensates of $^{7}$Li atoms\ \cite{Randy,Khayk}. Breathers of
orders $2$ \cite{Oppo,we} and $3$ \cite{we} have been experimentally demonstrated
more recently. Moreover, solitons also appear in many fields of nonlinear science
~\cite{s6a,a7}.

In 1971, Zakharov first
proposed the concept of {\it soliton gas}, defined as the large-$N$ limit of the $N$-soliton
solution of the KdV~\cite{zak71}. It is relevant to stress that,
although collisions between solitons governed by integrable nonlinear wave equations are
elastic, collisional effects in the soliton gas are not trivial, as the
elastic collisions give rise to phase shifts of  solitons \cite%
{GGKM,zs72,AKNS,s1} (an exception is the 2D
KP-I equation, where collisions between weakly
localized lump solitons yield zero phase shifts \cite{KP}).
Afterwards, the concept was extended to investigate the fluid dynamics of
soliton gases, breather gases, and dense soliton gases for other nonlinear
wave equations~\cite{E1,E2,E3,E5,E6,sg1,sg2,sg3,sg4,sg5,sg6}. Especially, El \textit{et al}
elaborated the spectral theory~\cite{E4} and numerical experiment~\cite{E-num} of soliton and breather gases for the
NLS equation. Suret \textit{et al} developed the nonlinear spectral synthesis of the soliton gas in deep-water surface
gravity waves~\cite{E7}. There were some related soliton gas experiments in
optics~\cite{sg-exp2009,sg-exp2007,sg-exp2014} and  shallow water regime~\cite{sg-exp2019}.
In particular, the concept of soliton and breather gases is relevant for the implementation in fiber lasers, where
it is possible to create chains composed of large numbers of solitons and breathers \cite{sg5}.
Recently, Girotti {\it et al} first presented
the soliton gas of the KdV and modified KdV equations, respectively,
starting from $N$-soliton solutions via Riemann-Hilbert (RH)
problems~\cite{Girotti-1,Girotti-2}. Bertola {\it et al}
further proposed the effect of \textit{soliton shielding, }%
alias \textquotedblleft soliton coagulation", in dense soliton
gases governed by the NLS with zero
backgrounds~\cite{Grava-3,Bertola-pa}. The effect implies that the
field generated by a superposition of a large set of specially placed
solitons may become tantamount to a few-soliton configuration. However, the
\textquotedblleft coagulation" was not  studied for large sets of
NLS breathers, rather than fundamental solitons via RH problems. Compared to zero boundary condition of NLS equation, the discrete spectrum with nonzero boundary condition exhibit more symmetry. It is worth studying whether there is a phenomenon of breather-shielding effect in this case.

In this paper, motivated by Ref.~\cite{Grava-3} for the soliton gas of the NLS equation with zero
backgrounds, we would like to analyze the breather-shielding effect and breather gas
(the $N\rightarrow \infty $ limit of the $N$-breather
solutions) of the focusing NLS equation with nonzero boundary conditions (BCs) of the Dirichlet
type at infinity~\cite{zs72}:
\begin{equation}
\left\{
\begin{array}{l}
iq_{t}+q_{xx}+2( \left\vert q\right\vert ^{2}-q_{0}^{2})
q=0,\quad (x,t)\in \mathbb{R}^{2},\vspace{0.1in} \\
\displaystyle\lim_{x\rightarrow \pm \infty }q(x,t)=q_{\pm }=\mathrm{const}%
,\quad |q_{\pm }|=q_{0}>0.%
\end{array}%
\right.  \label{nls}
\end{equation}%
Our starting point is the Zakharov-Shabat (ZS) scattering
problem (i.e., the Lax pair)~\cite{zs72,Fa,Biondini2014}
\bee\label{lax-x}
\left\{\begin{array}{l}
\Psi _{x}=U\Psi ,\quad U=\dfrac{i}{2}\left( k-\dfrac{q_{0}^{2}}{k}\right)\sigma _{3}+Q,\vspace{0.05in} \\
\Psi _{t}=W\Psi ,\quad W=-\dfrac{i}{2}\left(k^2+\dfrac{q_{0}^4}{k^2}\right)\sigma _{3}\!+\! i\sigma_{3}
\left( Q_{x}\!-\!Q^{2}\right),
\end{array}\right.
\ene
where $\Psi =\Psi (x,t;k)$ is a second-order matrix-valued Jost function, $k$
is a complex spectral parameter,
and the potential function matrix
and $\sigma _{3}$ are given by
\begin{gather}\label{Q}
Q=\begin{bmatrix} 0& q(x,t) \vspace{0.05in}\\ -q^*(x,t)&0 \end{bmatrix},\qquad
\sigma_3=\begin{bmatrix}
1&0\vspace{0.05in}\\
0&-1
\end{bmatrix},
\end{gather}
with $\ast $ denoting the complex conjugate.
Notice that Eq.~(\ref{nls}) is the compatibility condition (or zero-curvature equation) $U_t-W_x+[U,\, W]=0$ of the Lax pair (\ref{lax-x}).

Different types (Kuznetsov-Ma-type~\cite{Ma1,Ma2}, Akhmediev-type~\cite{Ak}) of breathers were found for the NLS equation. Moreover, their parameter limits could generate its new non-periodic rogue wave (RW)~\cite{rw}.
Recently, a
powerful approach was proposed for applying IST and obtaining exact
solutions to the focusing and defocusing NLS equations with the nonzero BCs,
in terms of RH problems and their extensions, including the discrete
case, multi-component systems, and nonlocal equations~\cite{Ablowitz2007,
Biondini2014a, Kraus2015, Biondini2015, Biondini2016a, Prinari2018}. Deift {\it et al}
proposed the steepest-descent approximation for RH problems to
explore the long-time asymptotics of the modified KdV equation~\cite{RH}.
Techniques based on IST and RH problems were developed in other directions
\cite{RH2,n1, n1a, n2, n3,dbar1,dbar2,dbar3,dbar4,dbar5}. In particular, Bilman \textit{et al}
combined the robust IST and Darboux transform to obtain
RW solutions of the NLS equation~\cite{RIST}. The asymptotics of multi-soliton solutions to the NLS
equation was addressed~\cite{bil1,bil3}. Later, a scale
transform and RH technique were applied to an $N$-RW solution of the NLS equation
to analyze its near- and far-field asymptotic behaviors~\cite{bil2}.
Recently, Romero-Ros {\it et al} experimentally demonstrated the RW dynamics
in a 3D coupled BECs~\cite{bec-rw}.
Recently, Falqui {\it et al}~\cite{Falqui-phd} reported some results about the shielding of breathers of the NLS equation.
In fact, we independently finished our paper and presented the detailed
analysis and more examples about breather gas and shielding of the focusing NLS equation. We now summarize the main results of our work as follows:
\begin{itemize}
 \item {} We show that the breather gas condenses into a single-breather solution whose discrete spectrum is centered at the circle domain, and a multi-breather solution for higher-degree quadrature domains;
 \item {} When discrete spectra concentrate along a line segment, we derive the corresponding Riemann-Hilbert problem. When the discrete spectrum is uniformly distributed within an ellipse, it is equivalent to the line-segment domain case.
 \end{itemize}

The rest of this paper is arranged as follows. In Sec. 2, we simply recall the direct and inverse scattering transforms and the corresponding RH problem of the NLS equation with nonzero backgrounds. In Sec. 3, we analyze the breather gas, which is the limit of the $N$-breather solution at $N\to\infty$, via the modified RH problem. Moreover, we address aggregate states in the form of $N$-breather solutions, when the respective discrete spectra are concentrated in specific domains. We show
that the breather gas coagulates into a single-breather solution whose spectral eigenvalue is
located at the center of the domain for the circle domain, and a multi-breather solution
for the higher-degree quadrature concentration domain. These coagulation phenomena in the breather gas are called
\textit{breather shielding}. In particular, when the nonzero boundary
conditions vanish, the breather gas reduces to an $n$-soliton solution. When
the discrete eigenvalues are concentrated on a line, we derive the
corresponding Riemann-Hilbert problem. When the discrete spectrum is uniformly distributed within an ellipse, it is equivalent to the case of the line domain. Finally, we present some conclusions and discussions in Sec. 4.

\section{Preliminaries}

In this section, we recall the basic properties about the direct and inverse scattering problems and RH problem of the NLS equation with nonzero BCs given by Eq.~(\ref{nls}) (see \cite{Biondini2014} for the details). As $x\to \pm \infty$, the ZS scattering problem (or Lax pair) (\ref{lax-x}) becomes the asymptotic form
\bee\label{lax-x-2}
\left\{\begin{array}{l}
\Psi^{bg}_{x}=U^{bg}\Psi^{bg},\quad U^{bg}=\dfrac{i}{2}\left( k-\dfrac{q_{0}^{2}}{k}\right)\sigma _{3}+Q_{\pm},\vspace{0.05in} \\
\Psi^{bg}_{t}=W^{bg}\Psi^{bg},\quad W^{bg}=-\dfrac{i}{2}\left(k^2+\dfrac{q_{0}^4}{k^2}\right)\sigma _{3}\!-\! iq_0^2\sigma_{3},
\end{array}\right.
\ene
which admits the solution
\begin{align}
\Psi_{\pm}^{bg}(x, t; k)=
\left\{
\begin{alignedat}{2}
&P_{\pm}(k)\,\mathrm{e}^{i\vartheta(x, t; k)\sigma_3}, && k\not=0,\, \pm iq_0,\\[0.04in]
&\mathbb{I}_2+\left(x-2k\,t\right)(ik\sigma_3+Q_{\pm}),&\quad& k=\pm iq_0,
\end{alignedat}\right.
\end{align}
where $Q_{\pm}=\displaystyle\lim_{x\to\pm \infty}Q(x,t)=Q(x,t)\big|_{q=q_{\pm}}$ and
\begin{align}\label{ez}
P_{\pm}\left(k\right)=\mathbb{I}_2+\dfrac{i}{k}\sigma_3Q_{\pm},\quad
\vartheta\left(x, t; k\right)=\frac{1}{2}\left(k+\frac{q_0^2}{k}\right)\left[x-\left(k-\frac{q_0^2}{k}\right)t\right],
\end{align}
with $\mathbb{I}_2$ is a $2\times2$ unit matrix.

Let $\Sigma=\mathbb{R}\cup C_0,\, \widehat\Sigma=\mathbb{R}\backslash\{0\}\cup C_0,\, \Sigma_0:=\Sigma\backslash\{\pm iq_0\}$ with $C_0=\{k\in\mathbb{C}: |k|=q_0\}$. The continuous spectrum of $X_{\pm}=\lim_{x\to\pm\infty}X$ is the set of all values of $k$ satisfying $k+q_0^2/k\in\mathbb{R}$, i.e., $k\in\Sigma=\mathbb{R}\cup C_0$, which are the jump contours for the related Riemann-Hilbert problem (see the inverse scattering problem). Let $\mathbb{D}^{+}\equiv \{k|~\mathrm{Im}(k)(1-q_{0}^{2}/|k|^{2})>0\},\, \mathbb{D}^{-}\equiv \{k|~\mathrm{Im}%
(k)(1-q_{0}^{2}/|k|^{2})<0\}$. Thus, one can simultaneously determine the Jost and  modified Jost solutions $\Psi_{\pm}(x, t; k)$ and $\mu_{\pm}(x, t; k)$ of the Lax pair (\ref{lax-x}) satisfying the boundary conditions
\bee\label{bianjie0}
\begin{array}{l}
\Psi_{\pm}(x, t; k)=P_{\pm}(k)\,\mathrm{e}^{i\vartheta(x, t; k)\sigma_3}+o\left(1\right), \quad x\to\pm\infty, \vspace{0.1in}\\
\mu_{\pm}(x, t; k)=\Psi_{\pm}(x, t; k)\,\mathrm{e}^{-i\vartheta(x, t; k)\sigma_3}\to P_{\pm}(k),\quad x\to\pm\infty,
\end{array}
\ene
where
\begin{align}\label{Jost-int}
\mu_{\pm}(k)\!=\!\left\{
\begin{aligned}
&P_{\pm}(k)\!\left\{\mathbb{I}_2\!+\!\!\int_{\pm\infty}^x\!\!\exp\left(\frac{i}{2}\left(k+\frac{q_0^2}{k}\right)(x-\xi)
\widehat\sigma_3\right)\!\!\left[P_{\pm}^{-1}(k)[Q(\xi, t)-Q_{\pm}]\,\mu_{\pm}(\xi, t; k)\right]d\xi\right\}, \\
 & \qquad\qquad\qquad\qquad\qquad\qquad\qquad k\ne 0,\,\pm iq_0, \, q-q_{\pm}\in L^1\!\left(\mathbb{R^{\pm}}\right),\\[0.05in]
&P_{\pm}(k)\!+\!\!\int_{\pm\infty}^x\!\!\!\left[I\!+\!\left(x\!-\!\xi\right)\!\left(Q_{\pm}\mp q_0\,\sigma_3\right)\right]\!
[Q(\xi, t)-Q_{\pm}]\,\mu_{\pm}(\xi, t; k)d\xi, \\
& \qquad\qquad\qquad\qquad\qquad\qquad\qquad  k=\pm iq_0,\,\left(1\!+\!\left|x\right|\right)\!\left(q\!-\!q_{\pm}\right)\in\! L^1\!\!\left(\mathbb{R^{\pm}}\right)
\end{aligned}\right.
\end{align}
with $e^{\alpha\widehat \sigma_3}A:=e^{\alpha\sigma_3}Ae^{-\alpha\sigma_3}$.

Let $\Psi_{\pm}(x, t; k)=(\Psi_{\pm 1}(x, t; k),\, \Psi_{\pm 2}(x, t; k))$ and $\mu_{\pm}(x, t; k)=(\mu_{\pm 1}(x, t; k),\, \mu_{\pm 2}(x, t; k))$. Then for the given $q-q_{\pm}\in L^1\!\left(\mathbb{R^{\pm}}\right)$, the Jost functions $\Psi_{\pm 2}(x, t; k)$ and modified forms $\mu_{\pm 2}(x, t; k)$ given by Eqs.~(\ref{bianjie0}) and (\ref{Jost-int}) both possess the unique solutions in $\Sigma_0$.  Moreover, $\mu_{+1,-2}(x, t; k)$ and $\Psi_{+1, -2}(x, t; k)$ ($\mu_{-1,+2}(x, t; k)$  and  $\Psi_{-1,+2}(x, t; k)$) can be extended analytically to $\mathbb{D}^{+}$ ( $\mathbb{D}^{-}$), and continuously to $\mathbb{D}^{+}\cup\Sigma_0$ ($\mathbb{D}^{-}\cup\Sigma_0$). Since $\Psi_{\pm}(x, t; k),\, k\not=0,\,\pm iq_0$ are both fundamental matrix solutions of the Lax pair (\ref{lax-x}), thus there exists a constant scattering matrix $S(k)$ between them
 \bee\label{sr}
\Psi_+(x, t; k)=\Psi_-(x, t; k)\,S(k),\quad \mu_+(x, t; k)=\mu_-(x, t; k)e^{i\vartheta\hat\sigma_3}\,S(k),\quad  k\in\Sigma_0,
 \ene
where  $S(k)=\left(s_{ij}(k)\right)_{2\times 2}$ with the scattering coefficients $s_{ij}(k)$'s with
$s_{11}(k)$ and $s_{22}(k)$ in $k\in\Sigma_0$ being extended analytically to $\mathbb{D}^{+}$ and $\mathbb{D}^{-}$,
and continuously to $\mathbb{D}^{+}\cup\Sigma_0$ and $\mathbb{D}^{-}\cup\Sigma_0$, respectively, and
$s_{12}(k)$ and $s_{21}(k)$ not being analytically continued away from $\Sigma_0$. The reflection coefficients are defined as $
\rho(k)=\frac{s_{21}(k)}{s_{11}(k)},\,\widehat\rho(k)=\frac{s_{12}(k)}{s_{22}(k)}, \, k\in \Sigma_0.$
The Jost solutions $\Psi(x,t;z),\, \mu_{\pm}(x,t;z)$, the scattering matrix $S(z)$, and reflection coefficients admit the following symmetries
\bee\label{ps}
\begin{array}{l}
\Psi_{\pm}(x,t; k)=\sigma_2\Psi_{\pm}^*(x,t; k^*)\sigma_2=\dfrac{i}{k}\Psi_{\pm}\left(x,t; -\dfrac{q_0^2}{k}\right)\sigma_3Q_{\pm}, \v\\
\mu_{\pm}(x,t; k)=\sigma_2\mu_{\pm}^*(x,t; k^*)\sigma_2=\dfrac{i}{k}\mu_{\pm}\left(x,t; -\dfrac{q_0^2}{k}\right)\sigma_3Q_{\pm},\quad \sigma_2={\rm antidiag}(-i, i),  \v\\
S(k)=\sigma_2S^*(k^*)\sigma_2=(\sigma_3Q_-)^{-1}S\left(-\dfrac{q_0^2}{k}\right)\sigma_3Q_+, \quad
 \rho(k)=-\widehat{\rho}^*(k^*)=\dfrac{q_-^*}{q_-}\,\widehat{\rho}\left(-\dfrac{q_0^2}{k}\right).
\end{array}
  \ene

Moreover, the asymptotic behaviors for modified Jost solutions and scattering matrix are found by
\bee \label{mu-s}
\left\{\begin{array}{lll}\label{jianjin}
\mu_{\pm}(x, t; k)=\mathbb{I}_2+O\left(\dfrac{1}{k}\right), & S(k)=\mathbb{I}_2+O\left(\dfrac{1}{k}\right), \qquad & k\to\infty,\v \\
\mu_{\pm}(x, t; k)=\dfrac{i}{k}\,\sigma_3\,Q_{\pm}+O\left(1\right),& S(k)={\rm diag}\left(\dfrac{q_-}{q_+},\, \dfrac{q_+}{q_-}\right)+O\left(k\right), \qquad & k\to0.
\end{array}\right.
\ene
The discrete spectrum of the focusing NLS equation with nonzero BCs (\ref{nls}) is the set (see Figure \ref{fig2})
\bee
K=K^+\cup K^-,\quad K^+=\left\{k_j,\, -q_0^2/k_j^*\right\}_{j=1}^{N}\subset \mathbb{D}^+,\quad
 K^-=\left\{\ k_j^*, \, -q_0^2/k_j\right\}_{j=1}^{N}\subset \mathbb{D}^-,
\ene
where $s_{11}(k_j)=0,\, s_{11}'(k_j)\not=0$.

\begin{figure}[!t]
\centering
\vspace{-0.00in}\hspace{-0.35in} {\scalebox{0.4}[0.4]{\includegraphics{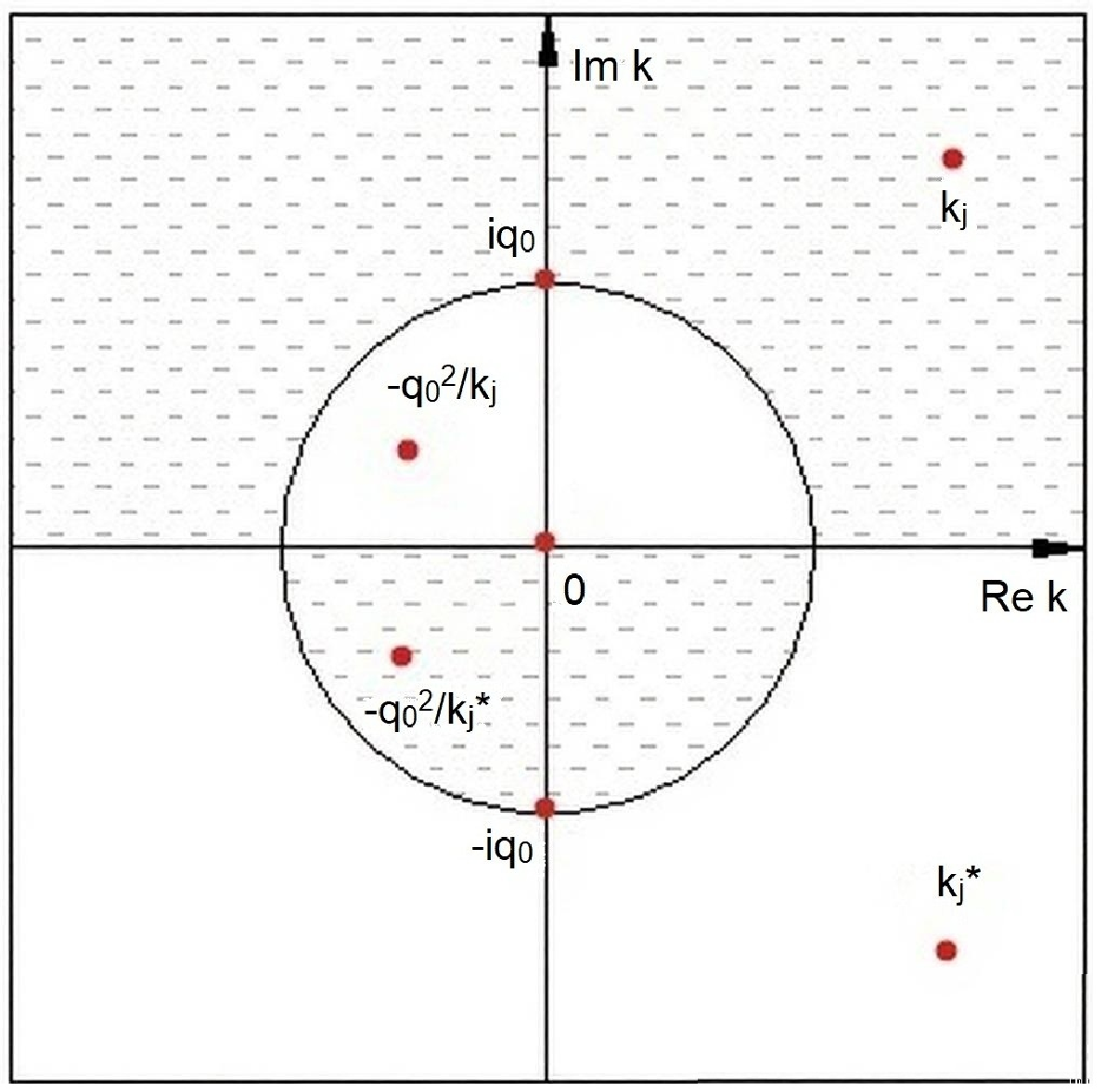}}}\hspace{-0.35in} \vspace{0.05in}
\caption{
The complex $k$-plane, showing the discrete spectrum $\{k_j,\, k_j^*,\, -q_0^2/k_j,\, -q_0^2/k_j^*\}_{j=1}^N$, and the shaded area indicates region $\mathbb{D}^+$, and the white area indicates region $\mathbb{D}^-$.}
\label{fig2}
\end{figure}

We construct a piecewise meromorphic function:
\bee\label{RHP-M}
M(x, t; k)=\left\{
\begin{array}{ll}
\left(\dfrac{\mu_{+1}(x, t; k)}{s_{11}(k)},\, \mu_{-2}(x, t; k)\right)
=\left(\dfrac{\Psi_{+1}(x, t; k)}{s_{11}(k)}, \Psi_{-2}(x, t; k)\right)\mathrm{e}^{-i\vartheta(x, t; k)\sigma_3}, & k\in \mathbb{D}^+, \v\v \\
\left(\mu_{-1}(x, t; k),\, \dfrac{\mu_{+2}(x, t; k)}{s_{22}(k)}\right)
=\left(\Psi_{-1}(x, t; k), \dfrac{\Psi_{+2}(x, t; k)}{s_{22}(k)}\right)\mathrm{e}^{-i\vartheta(x, t; k)\sigma_3},
  & k\in \mathbb{D}^-.
\end{array}\right.
\ene

Then, the matrix function $M(x,t;k)$ satisfies the following RH problem:

\begin{RH}\label{RH1}
Find a $2\times 2$ matrix $M(x,t;k)$ that satisfies the following conditions:

\begin{itemize}

 \item {} Analyticity: $M(x,t;k)$ is meromorphic in $\{k|k\in\mathbb{C}\setminus(\Sigma \cup K)\}$ and takes
continuous boundary values on $\Sigma$;

 \item {} The jump condition: the boundary values on the jump contour $\Sigma$ are defined as
 \bee
  M_+(k)=M_-(k)J(k), \quad J(k)=\mathrm{e}^{i\vartheta(x, t; k)\widehat\sigma_3}\left(\begin{array}{cc} 1-\rho(k)\widehat\rho(k)& -\widehat\rho(k)\\[0.05in] \rho(k)& 1 \end{array}\right),\,\,\, k\in\Sigma,
  \ene

 \item {} Normalization:
\bee \label{asy}
 M(x, t; k)=\left\{\begin{array}{ll}
    \mathbb{I}_2+O\left(1/k\right)  & k\to\infty, \v\\
    \dfrac{i}{k}\,\sigma_3\,Q_-+O\left(1\right), & k\to 0.
\end{array}\right.
\ene

\item {} The residue conditions: $M(x,t;k)$ has simple poles at each point in $K:=\{k_j,-\frac{q_0^2}{k_j^*},k_j^*,-\frac{q_0^2}{k_j}\}_{j=1}^N$ with
\begin{align}\label{M-res}
\begin{aligned}
&\mathop{\mathrm{Res}}\limits_{k=k_j}M(x,t;k)=\lim\limits_{k\to k_j}M(x,t;k)\left[\!\!\begin{array}{cc}
0& 0  \vspace{0.05in}\\
c_je^{-2i\vartheta(k_j)}& 0
\end{array}\!\!\right],\vspace{0.05in}\\
&\mathop{\mathrm{Res}}\limits_{k=-\frac{q_0^2}{k_j^*}}M(x,t;k)=\lim\limits_{k\to -\frac{q_0^2}{k_j^*}}M(x,t;k)\left[\!\!\begin{array}{cc}
0& 0  \vspace{0.05in}\\
-\frac{q_0^2q_-^*}{k_j^{*2}q_-}c_j^*e^{-2i\vartheta(-\frac{q_0^2}{k_j^*})}& 0
\end{array}\!\!\right],\vspace{0.05in}\\
&\mathop{\mathrm{Res}}\limits_{k=k_j^*}M(x,t;k)=\lim\limits_{k\to k_j^*}M(x,t;k)\left[\!\!\begin{array}{cc}
0& -c_j^*e^{2i\vartheta(k_j^*)}  \vspace{0.05in}\\
0& 0
\end{array}\!\!\right], \vspace{0.05in}\\
&\mathop{\mathrm{Res}}\limits_{k=-\frac{q_0^2}{k_j}}M(x,t;k)=\lim\limits_{k\to -\frac{q_0^2}{k_j}}M(x,t;k)\left[\!\!\begin{array}{cc}
0& \frac{q_0^2q_-}{k_j^2q_-^*}c_je^{2i\vartheta(-\frac{q_0^2}{k_j})}  \vspace{0.05in}\\
0& 0
\end{array}\!\!\right],
\end{aligned}
\end{align}
with $c_j$'s being complex constants(see Table \ref{table:s1}).
\end{itemize}
\end{RH}

\begin{table}[!hbt]
\renewcommand{\arraystretch}{1.5}
\begin{center}
\caption{The relationships between the norming constants.}
\begin{tabular}{c|c|c|c|c|c}
  \hline
  \hline
  \diagbox{$K$}{$j$} & 1 & 2 & 3 & 4 &$\cdots$ \\
  \hline
  $k_j$ & $c_1$ & $c_2$ & $c_3$ & $c_4$ & \quad\\[1em]
  $-\frac{q_0^2}{k_j^*}$ & $-\frac{q_0^2q_-^*}{k_1^{*2}q_-}c_1^*$ & $-\frac{q_0^2q_-^*}{k_2^{*2}q_-}c_2^*$ & $-\frac{q_0^2q_-^*}{k_3^{*2}q_-}c_3^*$ &$-\frac{q_0^2q_-^*}{k_4^{*2}q_-}c_4^*$ & \quad \\[1em]
  $k_j^*$ & $-c_1^*$ & $-c_2^*$ & $-c_3^*$ & $-c_4^*$ & \quad\\[1em]
  $-\frac{q_0^2}{k_j}$ & $\frac{q_0^2q_-}{k_1^2q_-^*}c_1$ & $\frac{q_0^2q_-}{k_2^2q_-^*}c_2$ & $\frac{q_0^2q_-}{k_3^2q_-^*}c_3$ & $\frac{q_0^2q_-}{k_4^2q_-^*}c_4$ & \quad\\
  $\cdots$&$\cdots$&$\cdots$&$\cdots$&$\cdots$&$\cdots$\\
  \hline
  \hline
\end{tabular}
\end{center}
\label{table:s1}
\end{table}\par

Then for the reflectionless case $\rho(k)=0$,  the $N$-breather solution $q(x,t)$ of the focusing NLS equation with nonzero BCs is given by
\bee\label{fanyan}
q(x,t)=-i\lim\limits_{k\rightarrow\infty}\left(k M(x,t;k)\right)_{12},
\ene
where $M(x,t;k)$ is determined by Eqs.~(\ref{M-res}) and (\ref{asy}) as
\begin{equation}
M(x,t;k)=\mathbb{I}+\dfrac{i}{k}\,\sigma _{3}\,Q_{-}+\sum_{j=1}^{2N}\left( \frac{%
\left[ \!\!%
\begin{array}{cc}
u_{j}(x,t; \eta) & 0 \\
v_{j}(x,t; \eta) & 0%
\end{array}%
\!\!\right] }{k-\eta _{j}}+\frac{\left[ \!\!%
\begin{array}{cc}
0 & \widehat{v}_{j}(x,t; \widehat{\eta}) \\
0 & \widehat{u}_{j}(x,t; \widehat{\eta})%
\end{array}%
\!\!\right] }{k-\widehat{\eta}_{j}}\right) ,
\end{equation}%
with $u_{j}=u_{j}(x,t; \eta),\,v_{j}=v_{j}(x,t; \eta),\, \widehat{u}_{j}=\widehat{u}_{j}(x,t; \widehat{\eta}),\,\widehat{v}_{j}=\widehat{v}_{j}(x,t; \widehat{\eta})$ that can be found from Eq.~(\ref%
{M-res}), $Q_-=\mathrm{antidiag}(q_-,-q^{\ast}_-)$, and $\eta _{j}=k_{j},\,\eta _{N+j}=-q_{0}^{2}/k_{j}^{\ast
}, \widehat{\eta}_j=-q_0^2/\eta_j,\,  \widehat{\eta}_{N+j}=-q_0^2/\eta_{N+j},\,(j=1,2,...,N)$. Note that the limiting $M(k)$ satisfies, in general, a
$\bar\partial$ problem. In particular, the discrete spectra must satisfy the ``theta" condition~\cite{Biondini2014}:
\bee
\mathrm{arg} (\frac{q_-}{q_+})=4\sum\limits_{j=1}^N\mathrm{arg} k_j.
\ene

\section{Breather gas: the limit of the $N$-breather solution at $N\to\infty$}

Below, we consider the ZS spectral problem for a reflectionless potential ($\rho(k)=0,\, k\in\Sigma$) with simple poles, which corresponds to focusing NLS equation with nonzero BC. We define a smooth contour $\Gamma_{1+}(\Gamma_{2+})$ in the domain $\mathbb{D}_+$, oriented counterclockwise, that encircles all the poles $\{k_j\}_{j=1}^N(\{-\frac{q_0^2}{k_j^*}\}_{j=1}^N)$ , and a smooth contour $\Gamma_{1-}(\Gamma_{2-})$ in the domain $\mathbb{D}_-$, oriented clockwise, that encircles all the poles $\{k_j^*\}_{j=1}^N(\{-\frac{q_0^2}{k_j}\}_{j=1}^N)$.

Based on the RH problem~\ref{RH1}, we consider following transformation:
\bee
M_1(x,t;k)
=\begin{cases}
M(x,t;k)\left[\!\!\begin{array}{cc}
1& 0  \vspace{0.05in}\\
-\sum\limits_{j=1}^{N}\dfrac{c_je^{-2i\vartheta(k_j)}}{k-k_j}& 1
\end{array}\!\!\right],\quad k~\mathrm{within}~\Gamma_{1+},\vspace{0.05in}\\
M(x,t;k)\left[\!\!\begin{array}{cc}
1& 0  \vspace{0.05in}\\
\sum\limits_{j=1}^{N}\dfrac{\frac{q_0^2q_-^*}{k_j^{*2}q_-}c_j^*e^{-2i\vartheta(-\frac{q_0^2}{k_j^*})}}{k+\frac{q_0^2}{k_j^*}}& 1
\end{array}\!\!\right],\quad k~\mathrm{within}~\Gamma_{2+},\vspace{0.05in}\\
M(x,t;k)\left[\!\!\begin{array}{cc}
1& \sum\limits_{j=1}^{N}\dfrac{c_j^*e^{2i\vartheta(k_j^*)}}{k-k_j^*}  \vspace{0.05in}\\
0& 1
\end{array}\!\!\right],\quad k~\mathrm{within}~\Gamma_{1-},\v\\
M(x,t;k)\left[\!\!\begin{array}{cc}
1& -\sum\limits_{j=1}^{N}\dfrac{\frac{q_0^2q_-}{k_j^2q_-^*}c_je^{2i\vartheta(-\frac{q_0^2}{k_j})}}{k+\frac{q_0^2}{k_j}}  \vspace{0.05in}\\
0& 1
\end{array}\!\!\right],\quad k~\mathrm{within}~\Gamma_{2-},\v\\
M(x,t;k),\quad \mathrm{otherwise}.
\end{cases}
\ene
Therefore, according to RH problem~\ref{RH1}, we know that the matrix function $M_1(x,t;k)$ satisfies the following RH problem:

\begin{RH}\label{RH2}
Find a $2\times 2$ matrix function $M_1(x,t;k)$ that meets the following conditions:

\begin{itemize}

 \item {} Analyticity: $M_1(x,t;k)$ is analytic in $\mathbb{C}\setminus(\Gamma_{1\pm}\cup\Gamma_{2\pm})$ and takes continuous boundary values on $\Gamma_{1\pm}\cup\Gamma_{2\pm}$.

 \item {} The jump condition: The boundary values on the jump contour $\Gamma_{1+}\cup\Gamma_{1-}$ are defined as
 \bee
M_{1+}(x,t;k)=M_{1-}(x,t;k)V_1(x,t;k),\quad \lambda\in\Gamma_{1\pm}\cup\Gamma_{2\pm},
\ene
where
\bee\label{V1-1}
V_1(x,t;k)
=\begin{cases}
\left[\!\!\begin{array}{cc}
1& 0  \vspace{0.05in}\\
-\sum\limits_{j=1}^{N}\dfrac{c_je^{-2i\vartheta(k_j)}}{k-k_j}& 1
\end{array}\!\!\right],\quad k\in\Gamma_{1+},\vspace{0.05in}\\
\left[\!\!\begin{array}{cc}
1& 0  \vspace{0.05in}\\
\sum\limits_{j=1}^{N}\dfrac{\frac{q_0^2q_-^*}{k_j^{*2}q_-}c_j^*e^{-2i\vartheta(-\frac{q_0^2}{k_j^*})}}{k+\frac{q_0^2}{k_j^*}}& 1
\end{array}\!\!\right],\quad k\in\Gamma_{2+}, \vspace{0.05in}\\
\left[\!\!\begin{array}{cc}
1& -\sum\limits_{j=1}^{N}\dfrac{c_j^*e^{2i\vartheta(k_j^*)}}{k-k_j^*}  \vspace{0.05in}\\
0& 1
\end{array}\!\!\right],\quad k\in\Gamma_{1-},\v\\
\left[\!\!\begin{array}{cc}
1& \sum\limits_{j=1}^{N}\dfrac{\frac{q_0^2q_-}{k_j^2q_-^*}c_je^{2i\vartheta(-\frac{q_0^2}{k_j})}}{k+\frac{q_0^2}{k_j}}  \vspace{0.05in}\\
0& 1
\end{array}\!\!\right],\quad k\in\Gamma_{2-}.
\end{cases}
\ene

 \item {} Normalization:
\bee
 M_1(x, t; k)=\left\{\begin{array}{ll}
    \mathbb{I}_2+O\left(1/k\right)  & k\to\infty, \v\\
    \dfrac{i}{k}\,\sigma_3\,Q_-+O\left(1\right), & k\to 0.
\end{array}\right.
\ene

\end{itemize}
\end{RH}

According to Eq.~(\ref{fanyan}), we recover $q(x,t)$ by means of the following formula:
\bee\label{fanyan-1}
q(x,t)=-i\lim\limits_{k\rightarrow\infty}\left(k M_{1}(x,t;k)\right)_{12}.
\ene

Below, we address the limit of $N\to\infty$, under the additional assumptions:
\begin{itemize}
 \item {} The discrete spectra $k_j,j=1,\cdots,N$ with the norming constants $c_j,j=1,\cdots,N$ fill uniformly domain $\Omega_1$ which is strictly contained in the domain $D_{\Gamma_1}$ bounded by $\Gamma_1$ and domain $\Omega_1$ satisfies Green's theorem.

 \item {} The normalization constants $c_j,j=1,\cdots,N$ have the following form:
\bee\label{c-j}
c_j=\frac{|\Omega_1|r(k_j,k_j^*)}{N\pi}.
\ene
where $|\Omega_1|$ means the area of the domain $\Omega_1$ and $r(k,k^*):=n(k^*-s_1^*)^{n-1}r_1(k)$ is a smooth function of variables $k$ and $k^*$ with $s_1\in\mathbb{C}_+$ and smooth function $r_1(k)$ is
subject to the symmetry relation $r_1^*(k)=r_1(k^*)$. It should be noted that quadrature domains are employed to streamline the evaluation of contour integrals.

\item {} The discrete spectra satisfy the ``theta" condition:
\bee
\mathrm{arg} (\frac{q_-}{q_+})=4\sum\limits_{j=1}^N\mathrm{arg} k_j.
\ene
\end{itemize}

\begin{prop}\label{le1}
Let $(x,t)$ be in compact subsets of $\mathbb{R}^2$. For any open set $B_+$ containing the domain $\Omega_1$, the following identities hold:
\begin{align}
\begin{aligned}
&\lim\limits_{N\to\infty}\sum\limits_{j=1}^{N}\frac{c_je^{-2i\vartheta(k_j)}}{k-k_j}=\iint_{\Omega_1}
\frac{r(\lambda,\lambda^*)e^{-2i\vartheta(\lambda)}}{2\pi i(k-\lambda)}d\lambda^*\wedge d\lambda,\v\\
&\lim\limits_{N\to\infty}\sum\limits_{j=1}^{N}\dfrac{\frac{q_0^2q_-^*}{k_j^{*2}q_-}c_j^*e^{-2i\vartheta(-\frac{q_0^2}{k_j^*})}}
{k+\frac{q_0^2}{k_j^*}}
=\iint_{\Omega_1}\dfrac{\frac{q_0^2q_-^*}{\lambda^{*2}q_-}r^*(\lambda,\lambda^*)e^{-2i\vartheta(-\frac{q_0^2}{\lambda^*})}}
{2\pi i(k+\frac{q_0^2}{\lambda^*})}d\lambda^*\wedge d\lambda,
\end{aligned}
\end{align}
uniformly for all $\mathbb{C}\setminus B_+$. The boundary $\partial\Omega_1$ has the counterclockwise orientation.

\end{prop}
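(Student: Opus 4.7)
The plan is to identify both left-hand sides as Riemann sums over the domain $\Omega_1$ whose limits, as the mesh of the partition shrinks to zero, are the claimed area integrals; then to rewrite the area element in terms of complex differentials.

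First I would substitute the prescribed form of the norming constant $c_j=|\Omega_1|r(k_j,k_j^*)/(N\pi)$ into the first sum, producing
\begin{equation}
\sum_{j=1}^{N}\frac{c_je^{-2i\vartheta(k_j)}}{k-k_j}=\frac{|\Omega_1|}{N}\sum_{j=1}^{N}\frac{r(k_j,k_j^*)e^{-2i\vartheta(k_j)}}{\pi(k-k_j)}.\nonumber
\end{equation}
Under the uniform-filling assumption, I can construct a Borel partition $\{\Omega_j\}_{j=1}^{N}$ of $\Omega_1$ with $k_j\in\Omega_j$, whose cell areas satisfy $|\Omega_j|=|\Omega_1|/N+o(1/N)$ and whose maximum diameter tends to zero. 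The sum is then a Riemann sum for the continuous integrand $f(\lambda;k):=r(\lambda,\lambda^*)e^{-2i\vartheta(\lambda)}/[\pi(k-\lambda)]$, which for $k$ away from $\Omega_1$ is smooth in $\lambda\in\overline{\Omega_1}$.

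Next I would verify the uniform convergence claim. Since $B_+$ is open and contains $\overline{\Omega_1}$, one has $\delta:=\mathrm{dist}(\mathbb{C}\setminus B_+,\overline{\Omega_1})>0$, so $|k-\lambda|\geq\delta$ uniformly for $(k,\lambda)\in(\mathbb{C}\setminus B_+)\times\overline{\Omega_1}$, and moreover $|f(\lambda;k)|=O(1/|k|)$ as $k\to\infty$. Combined with equicontinuity of $f(\cdot;k)$ in $\lambda$ for $k$ in any bounded subset of $\mathbb{C}\setminus B_+$ (compactness of $\Omega_1$), the Riemann sums converge to $\iint_{\Omega_1}f(\lambda;k)\,dA(\lambda)$ uniformly in $k\in\mathbb{C}\setminus B_+$. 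I would then convert the planar measure using $d\lambda\wedge d\lambda^*=-2i\,dA$, i.e.\ $dA=d\lambda^*\wedge d\lambda/(2i)$, which turns the prefactor $1/\pi$ into $1/(2\pi i)$ and produces exactly the first stated identity.

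For the second identity, I would repeat the argument with the integrand
\begin{equation}
g(\lambda;k):=\frac{1}{\pi}\cdot\frac{q_0^2q_-^*}{\lambda^{*2}q_-}\cdot\frac{r^*(\lambda,\lambda^*)e^{-2i\vartheta(-q_0^2/\lambda^*)}}{k+q_0^2/\lambda^*},\nonumber
\end{equation}
noting that $c_j^*=|\Omega_1|r^*(k_j,k_j^*)/(N\pi)$ by direct conjugation. Since $\overline{\Omega_1}\subset\mathbb{D}_+$ is bounded away from the origin, the factor $1/\lambda^{*2}$ is smooth on $\overline{\Omega_1}$, and the pole of $1/(k+q_0^2/\lambda^*)$ occurs at $\lambda^*=-q_0^2/k$, equivalently $\lambda=-q_0^2/k^*$. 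Because the discrete spectrum and $\Omega_1$ are contained in $B_+$, and the symmetry $k\mapsto-q_0^2/k^*$ maps $\Omega_1$ into the twin region around $\{-q_0^2/k_j^*\}$ that is also covered by $B_+$ (by the hypothesis that $B_+$ contains $\Omega_1$ in the sense of the full symmetric spectrum), this pole never lies in $\overline{\Omega_1}$ when $k\in\mathbb{C}\setminus B_+$, so $g(\cdot;k)$ remains bounded and continuous uniformly in $k$. The same Riemann-sum/wedge-product conversion then gives the second identity.

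The main technical point will be the uniformity in $k$ up to $\infty$, and the verification that the pole of the integrand in the second identity really is excluded for all $k\in\mathbb{C}\setminus B_+$; the rest is a textbook Riemann-sum convergence argument on a compact domain. No deep analysis beyond continuity of $r$ and of $\vartheta$ is needed — the theta condition and the reflectionless hypothesis have already been absorbed into the shape of the summand via the transformation leading to RH problem \ref{RH2}.
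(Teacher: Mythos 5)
Your proposal is correct and follows essentially the same route as the paper: substitute the prescribed form of $c_j$, recognize the resulting expression as a Riemann sum over the uniformly filled domain $\Omega_1$, and convert the area element via $d\lambda^*\wedge d\lambda=2i\,dA$ to get the $1/(2\pi i)$ prefactor. The paper's own proof is just the one-line version of this computation for the first identity (the second is omitted as analogous), so your added care about uniformity in $k$ and about the location of the pole $\lambda=-q_0^2/k^*$ in the second integrand only makes the argument more complete.
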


\begin{proof}
Using Eq.~(\ref{c-j}), we have
\begin{align}
\begin{aligned}
\lim\limits_{N\to\infty}\sum\limits_{j=1}^{N}\frac{c_je^{-2i\vartheta(k_j)}}{k-k_j}
=\lim\limits_{N\to\infty}\sum\limits_{j=1}^{N}\frac{|\Omega_1|}{N}\frac{r(k_j,k_j^*)e^{-2i\vartheta(k_j)}}{\pi(k-k_j)}
=\iint_{\Omega_1}\frac{r(\lambda,\lambda^*)e^{-2i\vartheta(\lambda)}}{2\pi i(k-\lambda)}d\lambda^*\wedge d\lambda.
\end{aligned}
\end{align}

Thus the proof is completed.
\end{proof}

It is noted that the Riemann-Hilbert problem with such jumps specified in Proposition \ref{le1} exists, as it is proved in \cite{Falqui-phd}.

\begin{prop}\label{le2}
The following identities hold:
\begin{align}
\begin{aligned}
&\iint_{\Omega_1}\frac{r(\lambda,\lambda^*)e^{-2i\vartheta(\lambda)}}{2\pi i(k-\lambda)}d\lambda^*\wedge d\lambda=
\int_{\partial{\Omega_1}}\frac{(\lambda^*-s_1^*)^{n}r_1(\lambda)e^{-2i\vartheta(\lambda)}}{2\pi i(k-\lambda)}d\lambda,\v\\
&\iint_{\Omega_1}\dfrac{\frac{q_0^2q_-^*}{\lambda^{*2}q_-}r^*(\lambda,\lambda^*)e^{-2i\vartheta(-\frac{q_0^2}{\lambda^*})}}
{2\pi i(k+\frac{q_0^2}{\lambda^*})}d\lambda^*\wedge d\lambda
=-\int_{\partial{\Omega_1}}\dfrac{\frac{q_0^2q_-^*}{\lambda^{*2}q_-}(\lambda-s_1)^{n}r_1^*(\lambda)e^{-2i\vartheta(-\frac{q_0^2}{\lambda^*})}}
{2\pi i(k+\frac{q_0^2}{\lambda^*})}d\lambda^*,
\end{aligned}
\end{align}
uniformly for all $\mathbb{C}\setminus \Omega_1$. The boundary $\partial\Omega_1$ has the counterclockwise orientation.

\end{prop}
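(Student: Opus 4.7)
The plan is to recognize both identities as applications of the complex form of Stokes' theorem (equivalently, the Cauchy--Pompeiu formula): for any smooth $1$-form $\omega = f\, d\lambda$ on $\overline{\Omega_1}$ one has
\[
\int_{\partial\Omega_1} f\, d\lambda \;=\; \iint_{\Omega_1}\frac{\partial f}{\partial \lambda^*}\, d\lambda^*\wedge d\lambda,
\]
and analogously $\int_{\partial\Omega_1} g\, d\lambda^* = \iint_{\Omega_1}(\partial g/\partial\lambda)\, d\lambda\wedge d\lambda^*$. The whole proof amounts to choosing the right $1$-form so that the only non-holomorphic factor is the polynomial $(\lambda^*-s_1^*)^n$ or $(\lambda-s_1)^n$, whose $\bar\partial$- or $\partial$-derivative delivers exactly $r(\lambda,\lambda^*)$ or $r^*(\lambda,\lambda^*)$.

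For the first identity, set
\[
\omega \;=\; \frac{(\lambda^*-s_1^*)^{n}\, r_1(\lambda)\, e^{-2i\vartheta(\lambda)}}{2\pi i\,(k-\lambda)}\, d\lambda
\]
on $\overline{\Omega_1}$. Since $k\in\mathbb{C}\setminus\Omega_1$, the kernel $(k-\lambda)^{-1}$ is holomorphic in $\lambda$ on $\overline{\Omega_1}$; the exponential $e^{-2i\vartheta(\lambda)}$ is holomorphic there as well (we use that $0\notin\Omega_1$, the only singularity of $\vartheta$ being at $\lambda=0$); and $r_1(\lambda)$ is holomorphic by the Schwarz-type symmetry relation $r_1^*(k)=r_1(k^*)$. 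Therefore $\partial/\partial\lambda^*$ acts only on $(\lambda^*-s_1^*)^n$, giving
\[
d\omega \;=\; \frac{n(\lambda^*-s_1^*)^{n-1}\,r_1(\lambda)\,e^{-2i\vartheta(\lambda)}}{2\pi i\,(k-\lambda)}\, d\lambda^*\wedge d\lambda
\;=\; \frac{r(\lambda,\lambda^*)\,e^{-2i\vartheta(\lambda)}}{2\pi i\,(k-\lambda)}\, d\lambda^*\wedge d\lambda,
\]
recalling $r(\lambda,\lambda^*)=n(\lambda^*-s_1^*)^{n-1}r_1(\lambda)$. Stokes' theorem with $\partial\Omega_1$ oriented counterclockwise then yields the first identity. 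Continuity of both sides in $k$ on compact subsets of $\mathbb{C}\setminus\Omega_1$ gives the claimed uniformity.

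For the second identity, the roles of $\lambda$ and $\lambda^*$ swap. Consider
\[
\alpha \;=\; \frac{\tfrac{q_0^2 q_-^*}{\lambda^{*2} q_-}\,(\lambda-s_1)^{n}\, r_1^*(\lambda)\, e^{-2i\vartheta(-q_0^2/\lambda^*)}}{2\pi i\,(k+q_0^2/\lambda^*)}\, d\lambda^*.
\]
Using the symmetry $r_1^*(\lambda)=r_1(\lambda^*)$, this factor depends only on $\lambda^*$; the same is true of $\lambda^{*-2}$, $e^{-2i\vartheta(-q_0^2/\lambda^*)}$ and $(k+q_0^2/\lambda^*)^{-1}$ (the last being holomorphic in $\lambda^*$ on $\overline{\Omega_1}$ since $k\in\mathbb{C}\setminus\Omega_1$ and $-q_0^2/\lambda^*$ stays away from $k$). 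Thus $\partial/\partial\lambda$ acts only on $(\lambda-s_1)^{n}$, producing
\[
d\alpha \;=\; \frac{\tfrac{q_0^2 q_-^*}{\lambda^{*2} q_-}\, n(\lambda-s_1)^{n-1}\, r_1^*(\lambda)\, e^{-2i\vartheta(-q_0^2/\lambda^*)}}{2\pi i\,(k+q_0^2/\lambda^*)}\, d\lambda\wedge d\lambda^*
\;=\; -\,\frac{\tfrac{q_0^2 q_-^*}{\lambda^{*2} q_-}\, r^*(\lambda,\lambda^*)\, e^{-2i\vartheta(-q_0^2/\lambda^*)}}{2\pi i\,(k+q_0^2/\lambda^*)}\, d\lambda^*\wedge d\lambda,
\]
where the minus sign arises from $d\lambda\wedge d\lambda^* = -\,d\lambda^*\wedge d\lambda$. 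Stokes' theorem gives $\int_{\partial\Omega_1}\alpha = \iint_{\Omega_1}d\alpha$, which rearranges exactly to the second identity (the overall minus sign on the right-hand side of the proposition tracks the wedge-order swap).

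The main obstacle is not computational but notational and hypothetical: one must interpret $r_1^*(\lambda)$ correctly (it equals $r_1(\lambda^*)$ by the stated symmetry, hence is \emph{antiholomorphic} in $\lambda$, which is precisely what forces $\partial_\lambda r_1^*(\lambda)=0$ in the second computation), and one must verify that the domain $\Omega_1$ excludes the poles of the integrands, i.e.\ $k\notin\Omega_1$ and $0\notin\Omega_1$, so that the non-polynomial factors in $\omega$ and $\alpha$ really are holomorphic/antiholomorphic as required. Once these checks are in hand, both identities reduce to a single application of Stokes' theorem, and the ``quadrature domain'' remark in the preceding assumptions can be read as the statement that this reduction from area to boundary is possible.
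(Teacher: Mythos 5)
Your proof is correct and follows essentially the same route as the paper: the paper likewise writes $r(\lambda,\lambda^*)=\overline\partial\bigl((\lambda^*-s_1^*)^n\bigr)r_1(\lambda)$ (resp.\ $r^*(\lambda,\lambda^*)=\partial\bigl((\lambda-s_1)^n\bigr)r_1^*(\lambda)$), pulls the $\overline\partial$ (resp.\ $\partial$) outside using holomorphy of the remaining factors, and applies Green's/Stokes' theorem to pass to the boundary integral, with the minus sign in the second identity coming from the wedge-order swap exactly as you note. Your added checks (holomorphy of $(k-\lambda)^{-1}$ and $e^{-2i\vartheta}$ on $\overline{\Omega_1}$, antiholomorphy of $r_1^*(\lambda)$) are implicit in the paper but correctly identified.
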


\begin{proof}
Note that $r(k,k^*):=nk^{*(n-1)}r_1(k)$, using Green theorem, we have
\begin{align}\no
\begin{aligned}
\iint_{\Omega_1}\frac{r(\lambda,\lambda^*)e^{-2i\vartheta(\lambda)}}{2\pi i(k-\lambda)}d\lambda^*\wedge d\lambda
=&\iint_{\Omega_1}\frac{\overline{\partial}((\lambda^*-s_1^*)^{n})r_1(\lambda)e^{-2i\vartheta(\lambda)}}{2\pi i(k-\lambda)}d\lambda^*\wedge d\lambda\v\\
=&\iint_{\Omega_1}\overline{\partial}\left(\frac{(\lambda^*-s_1^*)^{n}r_1(\lambda)e^{-2i\vartheta(\lambda)}}{2\pi i(k-\lambda)}\right)d\lambda^*\wedge d\lambda\v\\
=&\int_{\partial{\Omega_1}}\frac{(\lambda^*-s_1^*)^{n}r_1(\lambda)e^{-2i\vartheta(\lambda)}}{2\pi i(k-\lambda)}d\lambda,
\end{aligned}
\end{align}
and
\begin{align}\no
\begin{aligned}
\iint_{\Omega_1}\dfrac{\frac{q_0^2q_-^*}{\lambda^{*2}q_-}r^*(\lambda,\lambda^*)e^{-2i\vartheta(-\frac{q_0^2}{\lambda^*})}}
{2\pi i(k+\frac{q_0^2}{\lambda^*})}d\lambda^*\wedge d\lambda
=&\iint_{\Omega_1}\dfrac{\frac{q_0^2q_-^*}{\lambda^{*2}q_-}\partial((\lambda-s_1)^{n})r_1^*(\lambda)e^{-2i\vartheta(-\frac{q_0^2}{\lambda^*})}}
{2\pi i(k+\frac{q_0^2}{\lambda^*})}d\lambda^*\wedge d\lambda\v\\
=&-\int_{\partial{\Omega_1}}\dfrac{\frac{q_0^2q_-^*}{\lambda^{*2}q_-}(\lambda-s_1)^{n}r_1^*(\lambda)e^{-2i\vartheta(-\frac{q_0^2}{\lambda^*})}}
{2\pi i(k+\frac{q_0^2}{\lambda^*})}d\lambda^*.
\end{aligned}
\end{align}

Thus the proof is completed.
\end{proof}

\subsection{Quadrature domains}

In this subsection, we consider the following special cases: The discrete spectra $k_j,j=1,\cdots,N$ fill uniformly domain $\Omega_1$ which is strictly contained in the domain $D_{\Gamma_1}$, that is,
\bee
\Omega_1:=\{k|~|(k-s_1)^m-s_2|<s_3\},\quad \Omega_1\subset D_+,
\ene
where $m\in\mathbb{N}_+$ and $|s_2|,s_3$ are sufficiently small (see Figure \ref{fig-1}(a)). Note that Bertola {\it et al}~\cite{Grava-3} have
discussed the soliton shielding of the NLS equation with zero BC in the domain
$\Omega _{1}$.

\begin{figure}[!t]
\centering
\vspace{-0.00in}\hspace{-0.35in} {\scalebox{0.45}[0.45]{\includegraphics{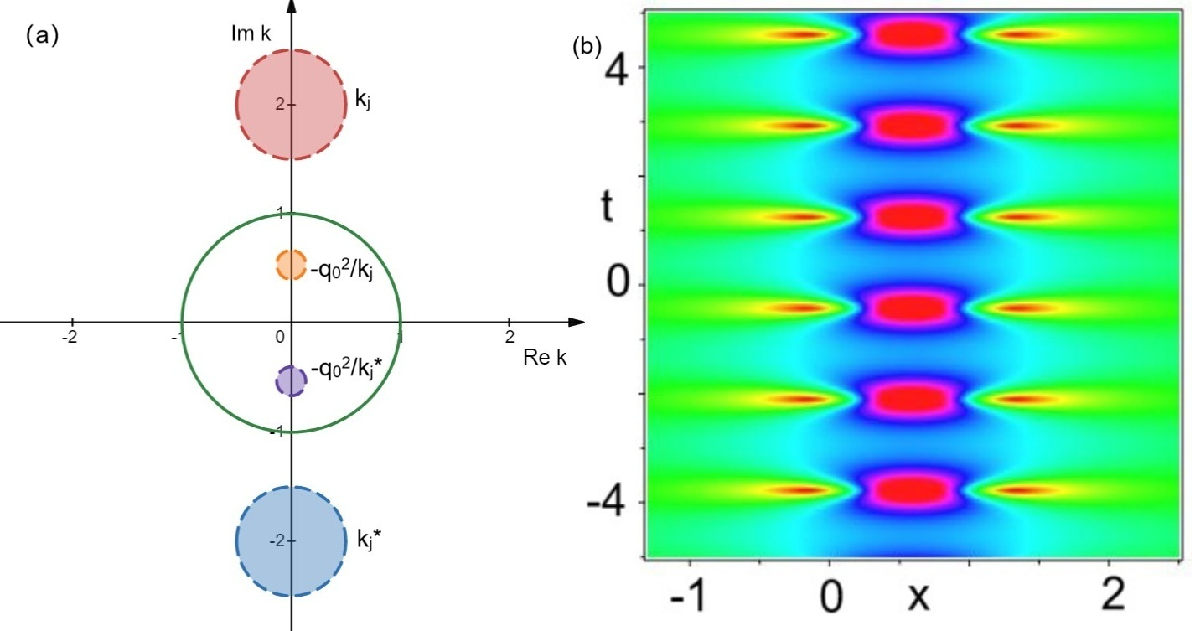}}}\hspace{-0.35in} \vspace{0.05in}
\caption{(a) The distribution of discrete spectrum $K$, the parameters being
$s_{1}=2i-\frac{1}{4},s_{2}=\frac{1}{4},s_{3}=\frac{1}{2},m=1,q_{0}=1$. (b) The
1-breather solution with the same parameters and $r_1(k)=4$.}
\label{fig-1}
\end{figure}

According to RH problems~\ref{le1} and \ref{le2}, we arrive at the following RH problem $M_2(x,t;k):=\lim\limits_{N\to\infty}M_1(x,t;k)$:

\begin{RH}\label{RH3}
Find a $2\times 2$ matrix function $M_2(x,t;k)$ that meets the following conditions:

\begin{itemize}

 \item {} Analyticity: $M_2(x,t;k)$ is analytic in $\mathbb{C}\setminus(\Gamma_{1\pm}\cup\Gamma_{2\pm})$ and takes continuous boundary values on $\Gamma_{1\pm}\cup\Gamma_{2\pm}$.

 \item {} The jump condition: The boundary values on the jump contour $\Gamma_{1+}\cup\Gamma_{1-}$ are defined as
 \bee
M_{2+}(x,t;k)=M_{2-}(x,t;k)V_2(x,t;k),\quad \lambda\in\Gamma_{1\pm}\cup\Gamma_{2\pm},
\ene
where
\bee\label{v2}
V_2(x,t;k)
=\begin{cases}
\left[\!\!\begin{array}{cc}
1& 0  \vspace{0.05in}\\
-\d\int_{\partial{\Omega_1}}\frac{(\lambda^*-s_1^*)^{n}r_1(\lambda)e^{-2i\vartheta(\lambda)}}{2\pi i(k-\lambda)}d\lambda& 1
\end{array}\!\!\right],\quad k\in\Gamma_{1+},\vspace{0.05in}\\
\left[\!\!\begin{array}{cc}
1& 0  \vspace{0.05in}\\
-\d\int_{\partial{\Omega_1}}\dfrac{\frac{q_0^2q_-^*}{\lambda^{*2}q_-}(\lambda-s_1)^{n}r_1^*(\lambda)
e^{-2i\vartheta(-\frac{q_0^2}{\lambda^*})}}
{2\pi i(k+\frac{q_0^2}{\lambda^*})}d\lambda^*& 1
\end{array}\!\!\right],\quad k\in\Gamma_{2+},\vspace{0.05in}\\
\left[\!\!\begin{array}{cc}
1& \d\int_{\partial{\Omega_1}}\frac{(\lambda-s_1)^{n}r_1^*(\lambda)e^{2i\vartheta(\lambda^*)}}{2\pi i(k-\lambda^*)}d\lambda^*  \vspace{0.05in}\\
0& 1
\end{array}\!\!\right],\quad k\in\Gamma_{1-},\v\\
\left[\!\!\begin{array}{cc}
1& \d\int_{\partial{\Omega_1}}\dfrac{\frac{q_0^2q_-}{\lambda^2q_-^*}(\lambda^*-s_1^*)^{n}r_1(\lambda)
e^{2i\vartheta(-\frac{q_0^2}{\lambda})}}{2\pi i(k+\frac{q_0^2}{\lambda})}d\lambda  \vspace{0.05in}\\
0& 1
\end{array}\!\!\right],\quad k\in\Gamma_{2-}.
\end{cases}
\ene

 \item {} The normalization:
\bee
 M_2(x, t; k)=\left\{\begin{array}{ll}
    \mathbb{I}_2+O\left(1/k\right)  & k\to\infty, \v\\
    \dfrac{i}{k}\,\sigma_3\,Q_-+O\left(1\right), & k\to 0.
\end{array}\right.
\ene

\end{itemize}
\end{RH}

According to Eq.~(\ref{fanyan-1}), we recover $q(x,t)$ by means of the following formula:
\bee\label{fanyan-2}
q(x,t)=-i\lim\limits_{k\rightarrow\infty}\left(k M_{2}(x,t;k)\right)_{12}.
\ene

To restrict the values of $n$ and $m$, we analyze the following three situations.

\v \noindent {\bf Case I.} The single-breather solution. In this case, we choose $n=m=1$. Then we arrive at the following Proposition.

\begin{prop}\label{prop3} Let $\lambda_0:=s_1+s_2$, then the solution of the RH problem \ref{RH3} is a single-breather solution $q_1(x,t)$ with the discrete eigenvalue $\lambda_0$ and normalization constants $c_1=s_3^2r_1(\lambda_0)$.
\end{prop}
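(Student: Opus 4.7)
The plan is to evaluate each of the four contour integrals in the jump matrix $V_2$ of Eq.~\eqref{v2} in closed form when $n=m=1$, recognize the resulting jumps as exactly those that appear in RH problem \ref{RH2} with $N=1$, and then invoke uniqueness of the RH problem.

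First I would observe that for $n=m=1$ the quadrature domain $\Omega_1=\{k:|k-s_1-s_2|<s_3\}$ is simply the open disk centered at $\lambda_0:=s_1+s_2$ of radius $s_3$, so its boundary admits the algebraic parametrization $(\lambda-\lambda_0)(\lambda^*-\lambda_0^*)=s_3^2$, equivalently
\[
\lambda^*-s_1^*\;=\;s_2^*+\frac{s_3^2}{\lambda-\lambda_0},\qquad \lambda\in\partial\Omega_1.
\]
Substituting this identity into the $\Gamma_{1+}$-entry of $V_2$ splits the integrand into two pieces. The piece proportional to $s_2^*\,r_1(\lambda)\,e^{-2i\vartheta(\lambda)}/(k-\lambda)$ is holomorphic in $\lambda$ throughout $\Omega_1$ (since $k\in\Gamma_{1+}$ lies outside $\Omega_1$ and $r_1$ is holomorphic by the symmetry assumption $r_1^*(k)=r_1(k^*)$), so by Cauchy's theorem it contributes zero. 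The remaining piece has a single simple pole at $\lambda=\lambda_0\in\Omega_1$ and the residue theorem yields
\[
-\!\int_{\partial\Omega_1}\!\frac{(\lambda^*-s_1^*)\,r_1(\lambda)\,e^{-2i\vartheta(\lambda)}}{2\pi i(k-\lambda)}\,d\lambda
\;=\;-\,\frac{s_3^2\,r_1(\lambda_0)\,e^{-2i\vartheta(\lambda_0)}}{k-\lambda_0}.
\]
Comparing with $V_1$ in Eq.~\eqref{V1-1} for $N=1$ forces the identifications $k_1=\lambda_0$ and $c_1=s_3^2\,r_1(\lambda_0)$.

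Next I would handle the three remaining jumps on $\Gamma_{2+},\Gamma_{1-},\Gamma_{2-}$ by the same circle parametrization. For the $\Gamma_{1-}$-integral, conjugation of the circle equation gives $\lambda-s_1=s_2+s_3^2/(\lambda^*-\lambda_0^*)$ on $\partial\Omega_1$; the integrand then splits again into a piece that is holomorphic in $\lambda^*$ inside $\Omega_1$ (vanishing by the conjugate Cauchy theorem) and a piece with a single pole at $\lambda^*=\lambda_0^*$, which by the residue theorem reproduces exactly the $(1,2)$-entry $-c_1^*\,e^{2i\vartheta(\lambda_0^*)}/(k-\lambda_0^*)$ required by RH problem \ref{RH2}. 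The two integrals on $\Gamma_{2\pm}$ are treated identically: after the same substitution, only the residue at $\lambda=\lambda_0$ (respectively $\lambda^*=\lambda_0^*$) survives, and the prefactor $q_0^2 q_-^{(*)}/(\lambda^{(*)})^2 q_{\mp}$ evaluated at $\lambda_0$ matches precisely the Schwarz-symmetric norming constants in Table \ref{table:s1} associated with the quartet $\{\lambda_0,\lambda_0^*,-q_0^2/\lambda_0,-q_0^2/\lambda_0^*\}$, once one uses $c_1=s_3^2 r_1(\lambda_0)$ and $r_1^*(\lambda_0)=r_1(\lambda_0^*)$.

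At this point the jump matrix of $M_2$ coincides with the jump of the 1-breather RH problem, while the normalization at $\infty$ and the behavior $\frac{i}{k}\sigma_3 Q_-+O(1)$ at $k=0$ are inherited verbatim from RH problem \ref{RH3}. Uniqueness of the RH problem thus identifies $M_2$ with the 1-breather $M$, and formula~\eqref{fanyan-2} delivers $q(x,t)=q_1(x,t)$ with eigenvalue $\lambda_0$ and norming constant $c_1=s_3^2 r_1(\lambda_0)$.

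The main obstacle I anticipate is the bookkeeping in the $\Gamma_{2\pm}$ jumps: one must verify that the factors $q_0^2 q_-^{(*)}/(\lambda^{(*)})^2 q_{\mp}$ together with the exponentials $e^{\pm 2i\vartheta(-q_0^2/\lambda^{(*)})}$ evaluated at $\lambda=\lambda_0$ reproduce precisely the ``paired'' norming constants $\mp\frac{q_0^2 q_-^{(*)}}{\lambda_0^{(*)2}q_{\mp}}c_1^{(*)}$ dictated by the Schwarz symmetries~\eqref{ps}; this requires using the theta condition $\arg(q_-/q_+)=4\arg\lambda_0$ to ensure the 1-breather on the nonzero background is globally well-defined, but no new analytic input beyond the two residue computations above.
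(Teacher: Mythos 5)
Your proposal is correct and follows essentially the same route as the paper: the paper disposes of Proposition \ref{prop3} by noting it is the $n=1$ case of Proposition \ref{prop4}, whose proof uses exactly your mechanism --- the quadrature-domain boundary relation $\lambda^*-s_1^*=s_2^*+s_3^2/(\lambda-\lambda_0)$ substituted into the four contour integrals of $V_2$, evaluation by residues at $\lambda_0$ (with the $s_2^*$-piece dropping out), identification of the resulting jumps with those of the finite-$N$ problem, and recovery of the residue-condition RH problem for the $1$-breather. Your explicit appeal to Cauchy's theorem for the holomorphic piece and to uniqueness of the RH problem makes the same argument slightly more self-contained, but it is not a different proof.
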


\begin{remark}
Proposition \ref{prop3} is a special case of Proposition \ref{prop4} when $n=1$. Therefore, when $n=m=1$ and $\lambda_0=s_1+s_2$, the solution of RH problem \ref{RH3} is the single breather
$q_1(x,t)$ with the eigenvalue $\lambda_0$ with the normalization constants $s_3^2r_1(\lambda_0)$(see Figure \ref{fig-1}(b)).
\end{remark}

In particular, we take $q_0=1$. When $s_1+s_2\to i$,  we obtain the Peregrine's rational solution (rogue wave) $q_{rw}(x,t)$ of the focusing NLS equation~\cite{rw}:
\bee\no
q_{rw}(x,t)=1-\dfrac{4(4it+1)}{4x^2+16t^2+1}.
\ene

\v \noindent {\bf Case II.} The $n$-breather solution. In this case, we choose $n=m$. Then we arrive at the following Proposition.

\begin{prop}\label{prop4} If $\{\lambda_1,\lambda_2,\cdots,\lambda_n\}$ is a solution to equation $(k-s_1)^n=s_2$, then the solution of the RH problem \ref{RH3} is the $n$-breather solution $q_n(x,t)$ with discrete eigenvalues $\lambda_j,j=1,\cdots,n$ and the normalization constants $c_j=\dfrac{s_3^2r_1(\lambda_j)}{\prod_{k\neq j}(\lambda_j-\lambda_k)},j=1,\cdots,n$.
\end{prop}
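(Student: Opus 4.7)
The plan is to evaluate the four contour integrals appearing in the jump matrix $V_2$ of RH problem~\ref{RH3} in closed form, reducing each to a finite sum of $n$ simple-pole terms indexed by the roots $\lambda_1,\dots,\lambda_n$ of $(k-s_1)^n=s_2$, and then to reverse the deformation $M\!\to\!M_1$ from Section~3 in order to recognize $M_2$ as the standard $N=n$ discrete-pole RH problem whose reconstruction formula~(\ref{fanyan-1}) produces the $n$-breather solution.

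The first (and key) step is the quadrature identity on the boundary. Since $n=m$, the equation $|(\lambda-s_1)^m-s_2|^2=s_3^2$ defining $\partial\Omega_1$ rearranges to the algebraic relation
\begin{equation*}
(\lambda^*-s_1^*)^{n}=s_2^*+\dfrac{s_3^{2}}{(\lambda-s_1)^{n}-s_2},\qquad \lambda\in\partial\Omega_1,
\end{equation*}
which I would substitute into the $\Gamma_{1+}$--integrand of $V_2$ in~(\ref{v2}). After this substitution the integrand becomes meromorphic in a neighborhood of $\overline{\Omega_1}$. The contribution coming from $s_2^*$ is holomorphic inside $\Omega_1$ (using that $k\notin\Omega_1$ and that $r_1$ is smooth with $r_1^*(k)=r_1(k^*)$, hence holomorphic in $k$), so by Cauchy's theorem it vanishes. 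The remaining piece has simple poles exactly at the $n$ solutions of $(\lambda-s_1)^{n}=s_2$, i.e.\ at $\lambda_1,\dots,\lambda_n$.

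Using the factorization $(\lambda-s_1)^{n}-s_2=\prod_{j=1}^{n}(\lambda-\lambda_j)$ and the residue theorem, this step yields
\begin{equation*}
\int_{\partial\Omega_1}\!\frac{(\lambda^*-s_1^*)^{n}r_1(\lambda)e^{-2i\vartheta(\lambda)}}{2\pi i(k-\lambda)}d\lambda=\sum_{j=1}^{n}\frac{c_j\,e^{-2i\vartheta(\lambda_j)}}{k-\lambda_j},\qquad c_j=\frac{s_3^{2}r_1(\lambda_j)}{\prod_{\ell\ne j}(\lambda_j-\lambda_\ell)}.
\end{equation*}
I would then repeat the argument for the three remaining jumps on $\Gamma_{2+}$, $\Gamma_{1-}$, $\Gamma_{2-}$, using the symmetry $r_1^*(k)=r_1(k^*)$ and the Schwarz / inversion symmetries $\lambda\mapsto\lambda^*$, $\lambda\mapsto-q_0^2/\lambda^*$ that are already embedded in the integrands of~(\ref{v2}); each of these integrals collapses to the analogous sum over the symmetric images of $\lambda_j$, with norming constants obtained from $c_j$ exactly as dictated by Table~\ref{table:s1}.

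Once all four jumps of $V_2$ have been expressed as finite pole-sums, $M_2(x,t;k)$ literally coincides with the matrix $M_1(x,t;k)$ of RH problem~\ref{RH2} for the finite discrete spectrum $\{\lambda_j,\ \lambda_j^*,\ -q_0^2/\lambda_j^*,\ -q_0^2/\lambda_j\}_{j=1}^{n}$ and the norming constants $c_j$. Inverting the triangular transformation $M\mapsto M_1$ then identifies $M_2$ with the RH problem~\ref{RH1} for the $n$-breather, and the reconstruction~(\ref{fanyan-2}) gives $q_n(x,t)$. The main technical obstacle I anticipate is precisely the bookkeeping in this last stage: one must check that the four symmetric families of poles produced by the four integrals match Table~\ref{table:s1} consistently, and that the collection $\{\lambda_j\}_{j=1}^{n}$ satisfies the $\theta$-condition $\arg(q_-/q_+)=4\sum_j\arg\lambda_j$; the latter is the genuine restriction that forces the admissible choices of $s_1,s_2$ (and thus restricts the location/orientation of $\Omega_1$) rather than a free consequence of the computation.
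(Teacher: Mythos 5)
Your proposal follows essentially the same route as the paper's proof: you substitute the quadrature identity $(\lambda^*-s_1^*)^n=s_2^*+s_3^2/((\lambda-s_1)^n-s_2)$ on $\partial\Omega_1$ into the four contour integrals of $V_2$, collapse them by residues to finite pole-sums over the roots $\lambda_j$ with exactly the stated norming constants $c_j$, and then undo the triangular deformation to recover the residue conditions of the finite-spectrum RH problem, which is precisely the paper's argument. Your additional remarks (the vanishing of the $s_2^*$ contribution by Cauchy's theorem, and the $\theta$-condition as a genuine constraint on $s_1,s_2$) are correct refinements rather than a different method, though note that analyticity of $r_1$ is an assumption implicit in the setup rather than a consequence of the symmetry $r_1^*(k)=r_1(k^*)$.
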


\begin{proof}

The boundary of $\Omega_1^*$, which is the complex-conjugate domain of $\Omega_1$, is defined by
\bee\label{propo2-1}
k^*=s_1^*+\left(s_2^*+\dfrac{s_3^2}{(k-s_1)^m-s_2}\right)^{\frac{1}{n}},\quad k\in\partial\Omega_1.
\ene

Substituting Eq.~(\ref{propo2-1}) into Eq.~(\ref{v2}), we obtain
\begin{align}\label{propo2-2}
\begin{aligned}
&\int_{\partial{\Omega_1}}\frac{(\lambda^*-s_1^*)^nr_1(\lambda)e^{-2i\vartheta(\lambda)}}{2\pi i(k-\lambda)}d\lambda
=\sum_{j=1}^n\frac{s_3^2r_1(\lambda_j)e^{-2i\vartheta(\lambda_j)}}{\prod_{k\neq j}(\lambda_j-\lambda_k)(k-\lambda_j)},\v\\
&\int_{\partial{\Omega_1}}\dfrac{\frac{q_0^2q_-^*}{\lambda^{*2}q_-}(\lambda-s_1)^nr_1^*(\lambda)
e^{-2i\vartheta(-\frac{q_0^2}{\lambda^*})}}
{2\pi i(k+\frac{q_0^2}{\lambda^*})}d\lambda^*
=-\sum_{j=1}^n\dfrac{\frac{q_0^2q_-^*}{\lambda_j^{*2}q_-}s_3^2r_1^*(\lambda_j)e^{-2i\vartheta(-\frac{q_0^2}{\lambda_j^*})}}
{\prod_{k\neq j}(\lambda_j^*-\lambda_k^*)(k+\frac{q_0^2}{\lambda_j^*})},\v\\
&\int_{\partial{\Omega_1}}\frac{(\lambda-s_1)^nr_1^*(\lambda)e^{2i\vartheta(\lambda^*)}}{2\pi i(k-\lambda^*)}d\lambda^*
=-\sum_{j=1}^n\frac{s_3^2r_1^*(\lambda_j)e^{2i\vartheta(\lambda_j^*)}}{\prod_{k\neq j}(\lambda_j^*-\lambda_k^*)(k-\lambda_j^*)},\v\\
&\int_{\partial{\Omega_1}}\dfrac{\frac{q_0^2q_-}{\lambda^2q_-^*}(\lambda^*-s_1^*)^nr_1(\lambda)e^{2i\vartheta(-\frac{q_0^2}{\lambda})}}{2\pi i(k+\frac{q_0^2}{\lambda})}d\lambda
=\sum_{j=1}^n\dfrac{\frac{q_0^2q_-}{\lambda_j^2q_-^*}s_3^2r_1(\lambda_j)e^{2i\vartheta(-\frac{q_0^2}{\lambda_j})}}{\prod_{k\neq j}(\lambda_j-\lambda_k)(k+\frac{q_0^2}{\lambda_j})},
\end{aligned}
\end{align}
Then Eqs.~(\ref{v2}) can be rewritten as
\bee
V_2(x,t;k)|_{m=n}
=\begin{cases}
\left[\!\!\begin{array}{cc}
1& 0  \vspace{0.05in}\\
-\sum\limits_{j=1}^n\dfrac{s_3^2r_1(\lambda_j)e^{-2i\vartheta(\lambda_j)}}{\prod_{k\neq j}(\lambda_j-\lambda_k)(k-\lambda_j)}& 1
\end{array}\!\!\right],\quad k\in\Gamma_{1+},\vspace{0.05in}\\
\left[\!\!\begin{array}{cc}
1& 0  \vspace{0.05in}\\
\sum\limits_{j=1}^n\dfrac{\frac{q_0^2q_-^*}{\lambda_j^{*2}q_-}s_3^2r_1^*(\lambda_j)e^{-2i\vartheta(-\frac{q_0^2}{\lambda_j^*})}}
{\prod_{k\neq j}(\lambda_j^*-\lambda_k^*)(k+\frac{q_0^2}{\lambda_j^*})}& 1
\end{array}\!\!\right],\quad k\in\Gamma_{2+},\vspace{0.05in}\\
\left[\!\!\begin{array}{cc}
1& -\sum\limits_{j=1}^n\dfrac{s_3^2r_1^*(\lambda_j)e^{2i\vartheta(\lambda_j^*)}}{\prod_{k\neq j}(\lambda_j^*-\lambda_k^*)(k-\lambda_j^*)}  \vspace{0.05in}\\
0& 1
\end{array}\!\!\right],\quad k\in\Gamma_{1-},\v\\
\left[\!\!\begin{array}{cc}
1&\sum\limits_{j=1}^n\dfrac{\frac{q_0^2q_-}{\lambda_j^2q_-^*}s_3^2r_1(\lambda_j)e^{2i\vartheta(-\frac{q_0^2}{\lambda_j})}}{\prod_{k\neq j}(\lambda_j-\lambda_k)(k+\frac{q_0^2}{\lambda_j})}  \vspace{0.05in}\\
0& 1
\end{array}\!\!\right],\quad k\in\Gamma_{2-}.
\end{cases}
\ene

Then, we consider following transformation:
\bee
\tilde{M}_2(x,t;k)
=\begin{cases}
M_2(x,t;k)\left[\!\!\begin{array}{cc}
1& 0  \vspace{0.05in}\\
\sum\limits_{j=1}^n\dfrac{s_3^2r_1(\lambda_j)e^{-2i\vartheta(\lambda_j)}}{\prod_{k\neq j}(\lambda_j-\lambda_k)(k-\lambda_j)}& 1
\end{array}\!\!\right],\quad k~\mathrm{within}~\Gamma_{1+},\vspace{0.05in}\\
M_2(x,t;k)\left[\!\!\begin{array}{cc}
1& 0  \vspace{0.05in}\\
-\sum\limits_{j=1}^n\dfrac{\frac{q_0^2q_-^*}{\lambda_j^{*2}q_-}s_3^2r_1^*(\lambda_j)e^{-2i\vartheta(-\frac{q_0^2}{\lambda_j^*})}}
{\prod_{k\neq j}(\lambda_j^*-\lambda_k^*)(k+\frac{q_0^2}{\lambda_j^*})}& 1
\end{array}\!\!\right],\quad k~\mathrm{within}~\Gamma_{2+},\vspace{0.05in}\\
M_2(x,t;k)\left[\!\!\begin{array}{cc}
1& -\sum\limits_{j=1}^n\dfrac{s_3^2r_1^*(\lambda_j)e^{2i\vartheta(\lambda_j^*)}}{\prod_{k\neq j}(\lambda_j^*-\lambda_k^*)(k-\lambda_j^*)}  \vspace{0.05in}\\
0& 1
\end{array}\!\!\right],\quad k~\mathrm{within}~\Gamma_{1-},\v\\
M_2(x,t;k)\left[\!\!\begin{array}{cc}
1&\sum\limits_{j=1}^n\dfrac{\frac{q_0^2q_-}{\lambda_j^2q_-^*}s_3^2r_1(\lambda_j)e^{2i\vartheta(-\frac{q_0^2}{\lambda_j})}}{\prod_{k\neq j}(\lambda_j-\lambda_k)(k+\frac{q_0^2}{\lambda_j})}  \vspace{0.05in}\\
0& 1
\end{array}\!\!\right],\quad k~\mathrm{within}~\Gamma_{2-},\v\\
M_2(x,t;k),\quad \mathrm{otherwise}.
\end{cases}
\ene

Through the above transformations, we can obtain the residue condition for the matrix function $\tilde{M}_2(x,t;k)$. $\tilde{M}_2(x,t;k)$ has simple poles at each point in $\{\lambda_j\}_{j=1}^n$ with
\begin{align}
\begin{aligned}
&\mathop{\mathrm{Res}}\limits_{k=\lambda_j}\tilde{M}_2(x,t;k)=\lim\limits_{k\to \lambda_j}\tilde{M}_2(x,t;k)\left[\!\!\begin{array}{cc}
0& 0  \vspace{0.05in}\\
\dfrac{s_3^2r_1(\lambda_j)}{\prod_{k\neq j}(\lambda_j-\lambda_k)}e^{-2i\vartheta(\lambda_j)}& 0
\end{array}\!\!\right],\vspace{0.05in}\\
&\mathop{\mathrm{Res}}\limits_{k=-\frac{q_0^2}{\lambda_j^*}}\tilde{M}_2(x,t;k)=\lim\limits_{k\to -\frac{q_0^2}{\lambda_j^*}}\tilde{M}_2(x,t;k)\left[\!\!\begin{array}{cc}
0& 0  \vspace{0.05in}\\
-\frac{q_0^2q_-^*}{\lambda_j^{*2}q_-}\dfrac{s_3^2r_1^*(\lambda_j)}{\prod_{k\neq j}(\lambda^*_j-\lambda^*_k)}e^{-2i\vartheta(-\frac{q_0^2}{\lambda_j^*})}& 0
\end{array}\!\!\right],\vspace{0.05in}\\
&\mathop{\mathrm{Res}}\limits_{k=\lambda_j^*}\tilde{M}_2(x,t;k)=\lim\limits_{k\to \lambda_j^*}\tilde{M}_2(x,t;k)\left[\!\!\begin{array}{cc}
0& -\dfrac{s_3^2r_1^*(\lambda_j)}{\prod_{k\neq j}(\lambda^*_j-\lambda^*_k)}e^{2i\vartheta(\lambda_j^*)}  \vspace{0.05in}\\
0& 0
\end{array}\!\!\right], \vspace{0.05in}\\
&\mathop{\mathrm{Res}}\limits_{k=-\frac{q_0^2}{\lambda_j}}\tilde{M}_2(x,t;k)=\lim\limits_{k\to -\frac{q_0^2}{\lambda_j}}\tilde{M}_2(x,t;k)\left[\!\!\begin{array}{cc}
0& \frac{q_0^2q_-}{\lambda_j^2q_-^*}\dfrac{s_3^2r_1(\lambda_j)}{\prod_{k\neq j}(\lambda_j-\lambda_k)}e^{2i\vartheta(-\frac{q_0^2}{\lambda_j})}  \vspace{0.05in}\\
0& 0
\end{array}\!\!\right],
\end{aligned}
\end{align}

Therefore, the solution of the RH problem \ref{RH3} is the $n$-breather state $q_n(x,t)$, with discrete eigenvalues $\lambda_j,\, j=1,\cdots,n$ and normalization constants $\dfrac{s_3^2r_1(\lambda_j)}{\prod_{k\neq j}(\lambda_j-\lambda_k)},j=1,\cdots,n$.
\end{proof}

\v \noindent {\bf Case III.} The $n$-soliton solution. Note that the solution of RH problem 3 with $q_0\to 0$ can reduce to the known $n$-soliton solution $q_{n}(x,t)$~\cite{Grava-3}, which can be also directly derived as the limit for $q_{0}\rightarrow 0$ of the obtained $n$-breather solution. Here we choose $n=m$. Then we arrive at the following remark.

\begin{remark} If $\{\lambda_1,\lambda_2,\cdots,\lambda_n\}$ is the solution to equation $(k-s_1)^n=s_2$ and $q_0\to0$, then the solution of the RH problem \ref{RH3} is the $n$-soliton state $q_n(x,t)$ with discrete eigenvalues $\lambda_j,j=1,\cdots,n$ and normalization constants $c_j=\dfrac{s_3^2r_1(\lambda_j)}{\prod_{k\neq j}(\lambda_j-\lambda_k)},j=1,\cdots,n$.
\end{remark}

\subsection{The line domain}

In this section, we address the limit of $N\to\infty$, under the additional assumptions:

\begin{itemize}

 \item {} Poles $\{k_j\}_{j=1}^N$ are sampled from a smooth density function $\rho(k)$ so that $\int_{a}^{-ik_j}\rho(\lambda)d\lambda=\frac{j}{N},j=1,\cdots,N$.

 \item {} The coefficients $\{c_j\}_{j=1}^N$ satisfy
\bee\label{c-j-2}
c_j=\dfrac{i(b-a)r(k_j)}{N\pi},\quad b>a>0,
\ene
where $r(k)$ is a real-valued, continuous and non-vanishing function of $k\in(ia,ib)$, subject to the symmetry relation, $r(k^*)=r(k)=r(-\frac{q_0^2}{k})=r(-\frac{q_0^2}{k^*})$.

\end{itemize}

\begin{prop}\label{le3}
For any open set $A_+(B_+)$ containing the interval $[ia,ib]([-\frac{iq_0^2}{a},-\frac{iq_0^2}{b}])$, and any open set $A_-(B_-)$ containing the interval $[-ib,-ia]([\frac{iq_0^2}{b},\frac{iq_0^2}{a}])$, the following identities hold:
\bee
\lim\limits_{N\to\infty}\sum\limits_{j=1}^{N}\frac{c_je^{-2i\vartheta(k_j)}}{k-k_j}
=\int_{ia}^{ib}\dfrac{r(w)e^{-2i\vartheta(w)}}{\pi(k-w)}dw,
\ene
uniformly for all $\mathbb{C}\setminus A_+$.
\bee
\lim\limits_{N\to\infty}\sum\limits_{j=1}^{N}\dfrac{\frac{q_0^2q_-^*}{k_j^{*2}q_-}c_j^*
e^{-2i\vartheta(-\frac{q_0^2}{k_j^*})}}{k+\frac{q_0^2}{k_j^*}}
=\int_{-\frac{iq_0^2}{a}}^{-\frac{iq_0^2}{b}}\dfrac{q_-^*r(w)e^{-2i\vartheta(w)}}{q_-\pi(k-w)}dw,
\ene
uniformly for all $\mathbb{C}\setminus B_+$.
\bee
\lim\limits_{N\to\infty}\sum\limits_{j=1}^{N}\frac{c_j^*e^{2i\vartheta(k_j^*)}}{k-k_j^*}=
-\int_{-ib}^{-ia}\dfrac{r(w)e^{2i\vartheta(w)}}{\pi(k-w)}dw,
\ene
uniformly for all $\mathbb{C}\setminus A_-$.
\bee
\lim\limits_{N\to\infty}\sum\limits_{j=1}^{N}\dfrac{\frac{q_0^2q_-}{k_j^2q_-^*}c_j
e^{2i\vartheta(-\frac{q_0^2}{k_j})}}{k+\frac{q_0^2}{k_j}}
=-\int_{\frac{iq_0^2}{b}}^{\frac{iq_0^2}{a}}\dfrac{q_-r(w)e^{2i\vartheta(w)}}{q^*_-\pi(k-w)}dw,
\ene
uniformly for all $\mathbb{C}\setminus B_-$. The open intervals $(ia,ib),(-\frac{iq_0^2}{a},-\frac{iq_0^2}{b}),(\frac{iq_0^2}{b},\frac{iq_0^2}{a})$ and $(-ib,-ia)$ are both oriented upwards.

\end{prop}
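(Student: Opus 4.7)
The plan is to reduce each of the four identities to a one-dimensional Riemann sum along a subinterval of the imaginary axis, and then to deduce the remaining three from the first by invoking the symmetries of the norming data recorded in Table \ref{table:s1} together with a suitable change of variable.

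For the first identity I would parametrize the poles by $\alpha_j:=-ik_j\in[a,b]$. The quantile condition $\int_a^{\alpha_j}\rho(\lambda)d\lambda=j/N$ fixes the $\alpha_j$ as the essentially equispaced partition with spacing $\Delta\alpha_j\sim(b-a)/N$ (consistent with the constant-scale weight $c_j=i(b-a)r(k_j)/(N\pi)$ appearing in the statement). Substituting this into the sum yields
\begin{equation*}
\sum_{j=1}^N\frac{c_j\,e^{-2i\vartheta(k_j)}}{k-k_j}
=\frac{i}{\pi}\sum_{j=1}^N\frac{r(i\alpha_j)\,e^{-2i\vartheta(i\alpha_j)}}{k-i\alpha_j}\,\Delta\alpha_j,
\end{equation*}
which is a Riemann sum for $(i/\pi)\int_a^b r(i\alpha)e^{-2i\vartheta(i\alpha)}/(k-i\alpha)\,d\alpha$; the substitution $w=i\alpha$ immediately produces the claimed Cauchy integral $\int_{ia}^{ib} r(w)e^{-2i\vartheta(w)}/[\pi(k-w)]\,dw$.

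Uniform convergence on compact subsets of $\mathbb{C}\setminus A_+$ is standard: since $A_+$ strictly contains $[ia,ib]$, the Cauchy kernel $1/(k-w)$ and its $w$-derivative are uniformly bounded for $k\in\mathbb{C}\setminus A_+$ and $w\in[ia,ib]$, so together with the smoothness of $r$ and $\vartheta$ along the segment one obtains a uniform $O(1/N)$ Riemann-sum error. The only delicate point is controlling the error as $k$ approaches $\partial A_+$, and that is precisely what the strict inclusion guarantees.

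The remaining three identities follow the same template, with the norming constants, pole locations and changes of variable read off from Table \ref{table:s1}. For the $-q_0^2/k_j^*$-row, one sets $w=-iq_0^2/\alpha$, with Jacobian $d\alpha/\alpha^2=-(i/q_0^2)\,dw$, mapping $[a,b]$ onto $[-iq_0^2/a,-iq_0^2/b]$; the symmetry relation $r(k)=r(-q_0^2/k)=r(k^*)$, together with the fact that $r$ is real-valued on the relevant subintervals of the imaginary axis, reduces $r(i\alpha)$ to $r(w)$, and the factor $q_0^2/k_j^{*2}$ in the norming constant combines with the Jacobian to yield exactly the stated integrand $q_-^*r(w)e^{-2i\vartheta(w)}/[q_-\pi(k-w)]$. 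The $k_j^*$- and $-q_0^2/k_j$-rows are handled by the substitutions $w=-i\alpha$ and $w=iq_0^2/\alpha$ respectively, the explicit minus signs on their right-hand sides arising from the upward-orientation conventions imposed on $(-ib,-ia)$ and $(iq_0^2/b,iq_0^2/a)$. The main obstacle, such as it is, is careful bookkeeping of the four Jacobians, the symmetries of $r$ under $k\mapsto k^*$ and $k\mapsto -q_0^2/k$, and the orientations of the four contours; once these are aligned, each identity is a smooth Riemann sum converging uniformly on the indicated exterior to a Cauchy integral.
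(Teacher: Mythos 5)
Your argument matches the paper's own proof: substitute the explicit form of $c_j$ from Eq.~(\ref{c-j-2}), read the resulting sum as a Riemann sum along the segment, and obtain the remaining three identities via the changes of variable $w\mapsto -q_0^2/w$ and complex conjugation together with the symmetries of $r$, exactly as the paper does. You additionally spell out the uniform convergence on $\mathbb{C}\setminus A_\pm$, $\mathbb{C}\setminus B_\pm$ that the paper leaves implicit; the only caveat is that for a non-constant sampling density $\rho$ the quantile spacing is $\sim 1/(N\rho(\alpha_j))$ rather than $(b-a)/N$, so the stated limit really presumes uniform sampling (or a $\rho$-weighted integrand) --- a point the paper's proof glosses over in the same way.
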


\begin{proof}
Using Eq.~(\ref{c-j-2}), we have
\begin{align}\no
\begin{aligned}
\lim\limits_{N\to\infty}\sum\limits_{j=1}^{N}\frac{c_je^{-2i\vartheta(k_j)}}{k-k_j}
=\lim\limits_{N\to\infty}\sum\limits_{j=1}^{N}\frac{i(b-a)}{N}\frac{r(k_j)e^{-2i\vartheta(k_j)}}{\pi(k-k_j)}
=\int_{ia}^{ib}\dfrac{r(w)e^{-2i\vartheta(w)}}{\pi(k-w)}dw,
\end{aligned}
\end{align}
and
\begin{align}\no
\begin{aligned}
\lim\limits_{N\to\infty}\sum\limits_{j=1}^{N}\dfrac{\frac{q_0^2q_-^*}{k_j^{*2}q_-}c_j^*
e^{-2i\vartheta(-\frac{q_0^2}{k_j^*})}}{k+\frac{q_0^2}{k_j^*}}
=&\lim\limits_{N\to\infty}\sum\limits_{j=1}^{N}-\frac{i(b-a)}{N}\dfrac{\frac{q_0^2q_-^*}{k_j^{*2}q_-}r(k_j)
e^{-2i\vartheta(-\frac{q_0^2}{k_j^*})}}{\pi(k+\frac{q_0^2}{k_j^*})}\v\\
=&-\int_{-ib}^{-ia}\dfrac{\frac{q_0^2q_-^*}{\lambda^{2}q_-}r(\lambda)
e^{-2i\vartheta(-\frac{q_0^2}{\lambda})}}{\pi(k+\frac{q_0^2}{\lambda})}d\lambda\v\\
=&\int_{-\frac{iq_0^2}{a}}^{-\frac{iq_0^2}{b}}\dfrac{q_-^*r(w)e^{-2i\vartheta(w)}}{q_-\pi(k-w)}dw,
\end{aligned}
\end{align}

Thus the proof is completed.
\end{proof}

At $N\to\infty$, according to Proposition~\ref{le3}, the jump matrix $V_1(x,t;k)$, defined by Eq.~(\ref{V1-1}), can be rewritten as:
\bee\label{V1-12}
V_1(x,t;k)|_{N\to\infty}
=\begin{cases}
\left[\!\!\begin{array}{cc}
1& 0  \vspace{0.05in}\\
-\d\int_{ia}^{ib}\dfrac{r(w)e^{-2i\vartheta(w)}}{\pi(k-w)}dw& 1
\end{array}\!\!\right],\quad k\in\Gamma_{1+},\vspace{0.05in}\\
\left[\!\!\begin{array}{cc}
1& 0  \vspace{0.05in}\\
\d\int_{-\frac{iq_0^2}{a}}^{-\frac{iq_0^2}{b}}\dfrac{q_-^*r(w)e^{-2i\vartheta(w)}}{q_-\pi(k-w)}dw& 1
\end{array}\!\!\right],\quad k\in\Gamma_{2+},\vspace{0.05in}\\
\left[\!\!\begin{array}{cc}
1& \d\int_{-ib}^{-ia}\dfrac{r(w)e^{2i\vartheta(w)}}{\pi(k-w)}dw  \vspace{0.05in}\\
0& 1
\end{array}\!\!\right],\quad k\in\Gamma_{1-},\v\\
\left[\!\!\begin{array}{cc}
1&-\d\int_{\frac{iq_0^2}{b}}^{\frac{iq_0^2}{a}}\dfrac{q_-r(w)e^{2i\vartheta(w)}}{q^*_-\pi(k-w)}dw  \vspace{0.05in}\\
0& 1
\end{array}\!\!\right],\quad k\in\Gamma_{2-}.
\end{cases}
\ene

Apply the following transformation:
\bee
M_3(x,t;k)
=\begin{cases}
M_1(x,t;k)\left[\!\!\begin{array}{cc}
1& 0  \vspace{0.05in}\\
\d\int_{ia}^{ib}\dfrac{r(w)e^{-2i\vartheta(w)}}{\pi(k-w)}dw& 1
\end{array}\!\!\right],\quad k~\mathrm{within}~\Gamma_{1+},\vspace{0.05in}\\
M_1(x,t;k)\left[\!\!\begin{array}{cc}
1& 0  \vspace{0.05in}\\
-\d\int_{-\frac{iq_0^2}{a}}^{-\frac{iq_0^2}{b}}\dfrac{q_-^*r(w)e^{-2i\vartheta(w)}}{q_-\pi(k-w)}dw& 1
\end{array}\!\!\right],\quad k~\mathrm{within}~\Gamma_{2+},\v\\
M_1(x,t;k)\left[\!\!\begin{array}{cc}
1& \d\int_{-ib}^{-ia}\dfrac{r(w)e^{2i\vartheta(w)}}{\pi(k-w)}dw  \vspace{0.05in}\\
0& 1
\end{array}\!\!\right],\quad k~\mathrm{within}~\Gamma_{1-},\v\\
M_1(x,t;k)\left[\!\!\begin{array}{cc}
1& -\d\int_{\frac{iq_0^2}{b}}^{\frac{iq_0^2}{a}}\dfrac{q_-r(w)e^{2i\vartheta(w)}}{q^*_-\pi(k-w)}dw  \vspace{0.05in}\\
0& 1
\end{array}\!\!\right],\quad k~\mathrm{within}~\Gamma_{2-},\v\\
M_1(x,t;k),\quad \mathrm{otherwise}.
\end{cases}
\ene

Then the matrix function $M_3(x,t;k)$ satisfies the following RH problem:

\begin{RH}\label{RH4}
Find a $2\times 2$ matrix function $M_3(x,t;k)$ that satisfies the particular conditions:

\begin{itemize}

 \item {} Analyticity: $M_3(x,t;k)$ is analytic in $\mathbb{C}\setminus((ia,ib)\cup(-\frac{iq_0^2}{a},-\frac{iq_0^2}{b})\cup(-ib,-ia)\cup(\frac{iq_0^2}{b},\frac{iq_0^2}{a}))$ and takes continuous boundary values on $(ia,ib)\cup(-\frac{iq_0^2}{a},-\frac{iq_0^2}{b})\cup(-ib,-ia)\cup(\frac{iq_0^2}{b},\frac{iq_0^2}{a})$(Directions of these open intervals are all facing upwards).

 \item {} The jump condition: The boundary values on the jump contour $(ia,ib)\cup(-\frac{iq_0^2}{a},-\frac{iq_0^2}{b})\cup(-ib,-ia)\cup(\frac{iq_0^2}{b},\frac{iq_0^2}{a})$ are defined as
 \bee\no
M_{3+}(x,t;k)=M_{3-}(x,t;k)V_3(x,t;k),\,\, \lambda\in(ia,ib)\cup(-\frac{iq_0^2}{a},-\frac{iq_0^2}{b})\cup(-ib,-ia)\cup(\frac{iq_0^2}{b},\frac{iq_0^2}{a}),
\ene
where
\bee
V_3(x,t;k)
=\begin{cases}
\left[\!\!\begin{array}{cc}
1& 0  \vspace{0.05in}\\
-2ir(k)e^{-2i\vartheta(k)}& 1
\end{array}\!\!\right],\quad k\in(ia,ib),\vspace{0.05in}\\
\left[\!\!\begin{array}{cc}
1& 0  \vspace{0.05in}\\
\dfrac{2iq_-^*r(k)e^{-2i\vartheta(k)}}{q_-}& 1
\end{array}\!\!\right],\quad k\in(-\frac{iq_0^2}{a},-\frac{iq_0^2}{b}),\vspace{0.05in}\\
\left[\!\!\begin{array}{cc}
1& -2ir(k)e^{2i\vartheta(k)}  \vspace{0.05in}\\
0& 1
\end{array}\!\!\right],\quad k\in(-ib,-ia),\v\\
\left[\!\!\begin{array}{cc}
1&  \dfrac{2iq_-r(k)e^{2i\vartheta(k)}}{q^*_-}  \vspace{0.05in}\\
0& 1
\end{array}\!\!\right],\quad k\in(\frac{iq_0^2}{b},\frac{iq_0^2}{a}).
\end{cases}
\ene

 \item {} The normalization:
\bee
 M_3(x, t; k)=\left\{\begin{array}{ll}
    \mathbb{I}_2+O\left(1/k\right)  & k\to\infty, \v\\
    \dfrac{i}{k}\,\sigma_3\,Q_-+O\left(1\right), & k\to 0.
\end{array}\right.
\ene

\end{itemize}
\end{RH}

 \begin{proof}
Using the Plemelj formula, we have
\bee
\begin{array}{rl}
M_{3+}(x,t;k)=&M_{1+}(x,t;k)\left[\!\!
\begin{array}{cc}
1& 0  \vspace{0.05in}\\
\d\int_{ia}^{ib}\dfrac{r(w)e^{-2i\vartheta(w)}}{\pi(k_+-w)}dw& 1
\end{array}\!\!\right] \v\\
=&M_{1-}(x,t;k)\left[\!\!\begin{array}{cc}
1& 0  \vspace{0.05in}\\
\d\int_{ia}^{ib}\dfrac{r(w)e^{-2i\vartheta(w)}}{\pi(k_--w)}dw& 1
\end{array}\!\!\right]\left[\!\!\begin{array}{cc}
1& 0  \vspace{0.05in}\\
-2ir(k)e^{-2i\vartheta(k)}& 1
\end{array}\!\!\right],\quad k\in(ia,ib),
\end{array}
\ene
Then, we have
\bee
V_3(x,t;k)
\left[\!\!\begin{array}{cc}
1& 0  \vspace{0.05in}\\
-2ir(k)e^{-2i\vartheta(k)}& 1
\end{array}\!\!\right],\quad k\in(ia,ib).
\ene
Using the same method, we can prove other cases as well.
\end{proof}

According to Eq.~(\ref{fanyan-1}), we recover $q(x,t)$ by the following formula:
\bee\label{fanyan-2g}
q(x,t)=-i\lim\limits_{k\rightarrow\infty}\left(k M_{3}(x,t;k)\right)_{12}.
\ene

Then the RH problem \ref{RH4} for the matrix function $M_3(x,t;k)$ represents the breather gas.

\subsection{The elliptic domain}

In this section, we consider the limit of $N\to\infty$, under the additional assumptions:
\begin{itemize}

 \item {} Discrete eigenvalues $k_j,j=1,\cdots,N$ with normalization constants $c_j,j=1,\cdots,N$ fill a uniformly compact domain $\Omega_2$ of the complex upper half plane $\mathbb{C}_+$, that is,
\bee
\Omega_2:=\left\{k\in\mathbb{C}|~\frac{{\rm Re}(k)^{2}}{b_{2}^{2}}+\frac{(2{\rm Im}(k)-a_{1}-a_{2})^{2}}{4b_{1}^{2}}<1\right\},\quad \Omega_2\subset D_+,
\ene
where $ia_1$ and $ia_2$\, ($a_2>a_1$) are the focal points of the ellipse $\partial\Omega_2$,  $b_1=\sqrt{b_2^2+(\frac{a_2-a_1}{2})^2}$, and $b_2$
is sufficiently small so that $\Omega_2$ lies in domain $D_+$.

 \item {} The normalization constants $c_j,j=1,\cdots,N$ have the following form:
\bee
c_j=\frac{|\Omega_2|r_1(k_j)}{N\pi}.
\ene
where $|\Omega_2|$ means the area of the domain $\Omega_2$ and $r_1(k)$ is an analytic functions in domain $\Omega_2$, subject to the symmetry condition $r_1^*(k)=r_1(k^*)$.

\end{itemize}

According to RH problems~\ref{le1} and \ref{le2}, we arrive at the following RH problem $M_4(x,t;k):=\lim\limits_{N\to\infty}M_1(x,t;k)$:

\begin{RH}\label{RH5}
Find a $2\times 2$ matrix function $M_4(x,t;k)$ that meets the particular conditions:

\begin{itemize}

 \item {} Analyticity: $M_4(x,t;k)$ is analytic in $\mathbb{C}\setminus(\Gamma_{1\pm}\cup\Gamma_{2\pm})$ and takes continuous boundary values on $\Gamma_{1\pm}\cup\Gamma_{2\pm}$.

 \item {} The jump condition: The boundary values on the jump contour $\Gamma_{1+}\cup\Gamma_{1-}$ are defined as
 \bee
M_{4+}(x,t;k)=M_{4-}(x,t;k)V_4(x,t;k),\quad \lambda\in\Gamma_{1\pm}\cup\Gamma_{2\pm},
\ene
where
\bee\label{v4}
V_4(x,t;k)
=\begin{cases}
\left[\!\!\begin{array}{cc}
1& 0  \vspace{0.05in}\\
- \d\int_{\partial{\Omega_2}}\frac{\lambda^*r_1(\lambda)e^{-2i\vartheta(\lambda)}}{2\pi i(k-\lambda)}d\lambda& 1
\end{array}\!\!\right],\quad k\in\Gamma_{1+},\vspace{0.05in}\\
\left[\!\!\begin{array}{cc}
1& 0  \vspace{0.05in}\\
-\d\int_{\partial{\Omega_2}}\dfrac{\frac{q_0^2q_-^*}{\lambda^{*2}q_-}\lambda r_1^*(\lambda)e^{-2i\vartheta(-\frac{q_0^2}{\lambda^*})}}
{2\pi i(k+\frac{q_0^2}{\lambda^*})}d\lambda^*& 1
\end{array}\!\!\right],\quad k\in\Gamma_{2+},\vspace{0.05in}\\
\left[\!\!\begin{array}{cc}
1& \d\int_{\partial{\Omega_2}}\frac{\lambda r_1^*(\lambda)e^{2i\vartheta(\lambda^*)}}{2\pi i(k-\lambda^*)}d\lambda^*  \vspace{0.05in}\\
0& 1
\end{array}\!\!\right],\quad k\in\Gamma_{1-},\v\\
\left[\!\!\begin{array}{cc}
1& \d\int_{\partial{\Omega_2}}\dfrac{\frac{q_0^2q_-}{\lambda^2q_-^*}\lambda^*r_1(\lambda)e^{2i\vartheta(-\frac{q_0^2}{\lambda})}}{2\pi i(k+\frac{q_0^2}{\lambda})}d\lambda  \vspace{0.05in}\\
0& 1
\end{array}\!\!\right],\quad k\in\Gamma_{2-}.
\end{cases}
\ene

 \item {} The normalization:
\bee
 M_4(x, t; k)=\left\{\begin{array}{ll}
    \mathbb{I}_2+O\left(1/k\right)  & k\to\infty, \v\\
    \dfrac{i}{k}\,\sigma_3\,Q_-+O\left(1\right), & k\to 0.
\end{array}\right.
\ene

\end{itemize}
\end{RH}

\begin{prop}\label{le4} The following identities hold:
\begin{align}\label{le4-1}
\begin{aligned}
&\int_{\partial{\Omega_1}}\frac{\lambda^*r_1(\lambda)e^{-2i\vartheta(\lambda)}}{2\pi i(k-\lambda)}d\lambda=\int_{ia_1}^{ia_2}\frac{\Delta F(k)r_1(\lambda)e^{-2i\vartheta(\lambda)}}{2\pi i(k-\lambda)}d\lambda,
\v\\
&\int_{\partial{\Omega_1}}\dfrac{\frac{q_0^2q_-^*}{\lambda^{*2}q_-}\lambda r_1^*(\lambda)e^{-2i\vartheta(-\frac{q_0^2}{\lambda^*})}}
{2\pi i(k+\frac{q_0^2}{\lambda^*})}d\lambda^*=\int_{-\frac{iq_0^2}{a}}^{-\frac{iq_0^2}{b}}\dfrac{\frac{q_-^*}{q_-}\Delta F^*(-\frac{q_0^2}{w^*})r_1(w)e^{-2i\vartheta(w)}}{2\pi i(k-w)}dw,\v\\
&\int_{\partial{\Omega_1}}\frac{\lambda r_1^*(\lambda)e^{2i\vartheta(\lambda^*)}}{2\pi i(k-\lambda^*)}d\lambda^*
=-\int_{-ia_2}^{-ia_1}\frac{\Delta F^*(k)r_1(\lambda)e^{2i\vartheta(\lambda)}}{2\pi i(k-\lambda)}d\lambda,\v\\
&\int_{\partial{\Omega_1}}\dfrac{\frac{q_0^2q_-}{\lambda^2q_-^*}\lambda^*r_1(\lambda)e^{2i\vartheta(-\frac{q_0^2}{\lambda})}}{2\pi i(k+\frac{q_0^2}{\lambda})}d\lambda=-\int_{\frac{iq_0^2}{b}}^{\frac{iq_0^2}{a}}\dfrac{\frac{q_-}{q_-^*}\Delta F(-\frac{q_0^2}{w})r_1(w)e^{2i\vartheta(w)}}{2\pi i(k-w)}dw,
\end{aligned}
\end{align}
where $\Delta F(k)=F_+(k)-F_-(k)$, and the function $F(k)$ is
analytic in complex plane $\mathbb{C}$ away from the segment $%
[ia_{1},ia_{2}] $, with boundary values $F_{\pm }(k)$.
\end{prop}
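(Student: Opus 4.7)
The plan is to realise $\lambda^{*}$, viewed as a function of $\lambda$ on the ellipse $\partial\Omega_{1}$, as the boundary value of a Schwarz-type function $F(\lambda)$ that is single-valued and analytic on $\mathbb{C}\setminus[ia_{1},ia_{2}]$, and then collapse the integration contour from the ellipse onto the focal segment by Cauchy's theorem. Because the remaining factors $r_{1}(\lambda)$, $e^{-2i\vartheta(\lambda)}$, and $(k-\lambda)^{-1}$ (with $k$ on the outer contour $\Gamma_{1+}$, hence outside $\partial\Omega_{1}$) are jointly analytic in the closed annular region between $\partial\Omega_{1}$ and a dog-bone around $[ia_{1},ia_{2}]$, the deformation transfers the whole integral onto the segment, where the two sides of the dog-bone contribute the jump $\Delta F=F_{+}-F_{-}$.

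Concretely, I would uniformise $\partial\Omega_{1}$ by the Joukowski map
\[
\lambda(w)=\frac{i(a_{1}+a_{2})}{2}+\frac{i(a_{2}-a_{1})}{4}\bigl(w+w^{-1}\bigr),
\]
under which $|w|=1$ double-covers $[ia_{1},ia_{2}]$ and $|w|=R_{0}$ traces $\partial\Omega_{1}$ for a unique $R_{0}>1$ determined by the semi-axes $b_{1},b_{2}$. On $|w|=R_{0}$ one has $\bar w=R_{0}^{2}/w$, whence
\[
\lambda^{*}\bigl|_{\partial\Omega_{1}}=-\frac{i(a_{1}+a_{2})}{2}-\frac{i(a_{2}-a_{1})}{4}\!\left(\frac{R_{0}^{2}}{w}+\frac{w}{R_{0}^{2}}\right).
\]
I then \emph{define} $F(\lambda)$ by this same rational expression in $w$, with $w=w(\lambda)$ the branch of the inverse Joukowski map satisfying $|w|\ge 1$. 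This branch is single-valued and analytic off $[ia_{1},ia_{2}]$, so $F$ is analytic on $\mathbb{C}\setminus[ia_{1},ia_{2}]$; across the segment the two branches interchange via $w\leftrightarrow 1/w$, producing a concrete, non-vanishing jump $\Delta F(\lambda)=F_{+}(\lambda)-F_{-}(\lambda)$. Cauchy's theorem applied to the annular region, followed by shrinking the dog-bone onto the segment, then yields the first identity of (\ref{le4-1}).

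For the second identity I would apply the same deformation after the involution $\lambda\mapsto -q_{0}^{2}/\lambda^{*}$, which sends $\partial\Omega_{1}$ to its conjugate-inverted image, carries $[ia_{1},ia_{2}]$ onto the segment $[-iq_{0}^{2}/a_{1},-iq_{0}^{2}/a_{2}]$ used on the right-hand side, and generates the factor $q_{-}^{*}/q_{-}$ from the residue condition table; the relevant Schwarz function is $-q_{0}^{2}/F$ and its jump is precisely $\Delta F^{*}(-q_{0}^{2}/w^{*})$ after the change of variable. The third and fourth identities follow from the first two by complex conjugation combined with the assumed symmetries $r_{1}^{*}(\lambda)=r_{1}(\lambda^{*})$ and $\vartheta^{*}(\lambda)=\vartheta(\lambda^{*})$, together with the reversal of orientation induced by conjugating the oriented ellipse.

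The main obstacle is the construction and verification of $F$: one must check that the branch $|w|\ge 1$ of the inverse Joukowski map is truly single-valued and analytic on $\mathbb{C}\setminus[ia_{1},ia_{2}]$, that its boundary values on $\partial\Omega_{1}$ coincide with $\lambda^{*}$ (and not with the conjugate of some other confocal ellipse), and that the integrand is jointly analytic on the closure of the annular region between $\partial\Omega_{1}$ and the dog-bone. The analyticity of $r_{1}$ in a neighbourhood of $\overline{\Omega_{1}}$ and the placement of $k$ on the outer jump contour are precisely the hypotheses that make this work; once the Schwarz construction is secured, the four identities are parallel applications of Cauchy's theorem and routine bookkeeping of conjugates.
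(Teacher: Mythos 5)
Your proposal is correct and follows essentially the same route as the paper: the paper writes the Schwarz function of the ellipse explicitly as a linear function of $k$ plus a multiple of $F(k)$ with $F(k)^2=(k-ia_1)(k-ia_2)$ (branch cut along the focal segment) and then collapses the contour onto $[ia_1,ia_2]$ to pick up the jump $\Delta F$, which is exactly what your Joukowski-map construction yields once $w$ is eliminated in favour of $\lambda$. The remaining three identities are obtained by the same involution and conjugation bookkeeping in both treatments.
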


\begin{proof}

The boundary of the complex-conjugate domain $\Omega _{2}^{\ast }$ of $%
\Omega _{2}$ is described by
\bee\label{le4-2}
k^*=\left(1-\frac{8b_1^2}{(a_2-a_1)^2}\right)\left(k-\frac{i(a_1+a_2)}{2}\right)+\frac{8b_1b_2}{(a_2-a_1)^2}F(k)-\frac{i(a_1+a_2)}{2},\quad k\in\partial\Omega_2,
\ene
where $F(k)^2=(k-ia_1)(k-ia_2)$.
Using Eq.~(\ref{le4-2}), one can obtain Eq.~(\ref{le4-1}).
\end{proof}

According to Proposition \ref{le4}, we find that the RH problem \ref{RH5} for matrix function $M_4(x,t;k)$ is equivalent to the one in case of the line domain.

\begin{remark}
As $q_0\to0$ and $t=0$, the limiting initial data is step-like oscillatory with the elliptic travelling wave of
type $dn(x)$ as $x\to-\infty$ and exponentially going to zero as $x\to+\infty$. For detailed parameters, refer to references~\cite{Grava-3,Bertola-pa}.
\end{remark}

\section{Conclusions and discussions}

Based on the IST and RH problems, we have investigated a breather gas represented by
the $N_{\infty}$-breather solution of the focusing NLS equation with nonzero BCs. In terms of scattering data of IST, the $N$-breather
solutions are based on the set of discrete eigenvalues $K\equiv \{k_{j},-\frac{%
q_{0}^{2}}{k_{j}^{\ast }},k_{j}^{\ast },-\frac{q_{0}^{2}}{k_{j}}\}_{j=1}^{N}$%
with normalization constants.
$\{c_{j},-\frac{%
q_-^{*2}c_{j}^{\ast}}{k_{j}^{\ast 2}},-c_{j}^{\ast },%
\frac{q_-^{2}c_{j}}{k_{j}^{2}}\}_{j=1}^{N}$.
By concentrating the set of $\{k_{j}\}_{j=1}^{N}$
in different domains, we have
produced different types of breather gases which coagulate into thefollowing effective forms: i) The concentration domain in the form of a disk
condenses the gas into the single-breather solution with the spectral
eigenvalue located at the disk's center; ii) The quadrater domain with $m=n$
\ and $q\rightarrow 0$ leads to the coagulation of the gas into the $n$%
-breather solution. These are examples of the
breather-gas shielding. The discrete spectra concentrated in line domains
imply solving the corresponding RH problems. The case of
the discrete spectra lying on an ellipse is tantamount to the case of the
line domain. When discrete spectra are uniformly distributed within a specified region, the interaction among breathers manifests itself in the form of $n$-breathers, where parameter $n$ is correlated with the region in question. The phenomenon of breather shielding can explain the distribution of the breathers when the discrete spectrum is densely distributed. The methodology presented here can be extended to other integrable equations and can also be employed to investigate the asymptotic behavior of breathers in different regions. For the phenomenon of breather shielding, we have developed here only the analytical framework. Verification of the findings by means of numerical methods is a subject for a separate work.

The breather-gas shielding predicted by the present analysis can be
observed in fiber optics, BEC, and other physical realizations of the NLS.
The approach developed in this work can be extended to other
integrable models -- first of all, those based on ZS-type spectral problems.
A
challenging possibility is to extend the analysis of the shielding
phenomenology to quantized NLS fields, in which breather states feature
specific fluctuations \cite{Yurovsky}.

\v \noindent {\bf Acknowledgments} \, We thank Prof. Marco Bertola (Concordia University, Canada) for valuable suggestions and discussions. This work of W.W. was supported by the National Natural Science Foundation of China (No.12501327). The work of Z.Y. was supported by the National Natural Science Foundation of China (No. 12471242). The work of B.A.M. is supported, in part, by Israel Science Foundation (No. 1695/22). The work of G.Z. was supported by National Key R\&D Program of China under Grant 2024YFA1013101 and the National Natural Science Foundation of China (No. 12201615).

\v \noindent {\bf Data availability} No new data were created or analyzed in this study.

\section*{Declarations}

 \noindent {\bf Conflict of interest} On behalf of all authors, the corresponding author states that there is no conflict of
interest.

\end{document}